\newcommand{\mbb}[1]{\mathbb{#1}}
\newcommand{\mcl}[1]{\mathcal{#1}}
\renewcommand{\d}{\mathrm{d}}
\newcommand{\Tr}{\operatorname{Tr}}
\newcommand{\ket}[1]{\left| #1 \right\rangle}
\newcommand{\ketbra}[2]{\left| #1 \right\rangle \left\langle #2 \right|}
\newcommand{\dprod}[2]{\left\langle #1, #2\right\rangle}
\newcommand{\abs}[1]{\left| #1\right|}
\newtheorem{lemma}{Lemma}
\newtheorem*{lemma*}{Lemma}
\newtheorem{theorem}{Theorem}
\newtheorem{conjecture}{Conjecture}
\newtheorem{corollary}{Corollary}
\newtheorem*{corollary*}{Corollary}
\theoremstyle{remark}
\definecolor{npurple}{rgb}{0.3,0,0.6}
\renewcommand{\S}{\mathcal{S}}
\newcommand{\N}{\mcl{N}}
\newcommand{\M}{\mcl{M}}
\newcommand{\II}{\mbb{I}}
\newcommand{\CC}{\mbb{C}}
\newcommand{\C}{\mcl{C}}
\renewcommand{\P}{\mcl{P}}
\newcommand{\K}{\mcl{K}}
\renewcommand{\H}{\mcl{H}}
\newcommand{\A}{\mcl{A}}
\newcommand{\B}{\mcl{B}}
\newcommand{\X}{\mcl{X}}
\newcommand{\Y}{\mcl{Y}}
\renewcommand{\L}{\mcl{L}}
\newcommand{\n}{\pmb{n}}
\newcommand{\G}{\mathcal{G}}
\definecolor{mygray}{gray}{0.6}
\begin{document}
\title{Quantum steering with positive operator valued measures}
\date{\today}
\author{H. Chau Nguyen}
\email{chau@pks.mpg.de}
\affiliation{Max-Planck-Institut f\"ur Physik komplexer Systeme,  N\"othnitzer Stra{\ss}e 38, D-01187 Dresden, Germany}
\affiliation{Naturwissenschaftlich-Technische Fakult\"at, Universit\"at Siegen, Walter-Flex-Stra{\ss}e 3, D-57068 Siegen, Germany}

\author{Antony Milne}
\affiliation{Department of Physics, Imperial College London, London SW7 2AZ, United Kingdom}
\affiliation{Department of Computing, Goldsmiths, University of London, New Cross, London SE14 6NW, United Kingdom}

\author{Thanh Vu}
\affiliation{Hanoi University of Science and Technology, 1 Dai Co Viet, Hai Ba Trung, Hanoi, Vietnam}

\author{Sania Jevtic}
\affiliation{Department of Mathematics, Imperial College London, London SW7 2AZ, United Kingdom}

\begin{abstract}
We address the problem of quantum nonlocality with positive operator valued measures (POVM) in the context of Einstein-Podolsky-Rosen quantum steering. We show that, given a candidate for local hidden state (LHS) ensemble, the problem of determining the steerability of a bipartite quantum state of finite dimension with POVMs can be formulated as a nesting problem of two convex objects. One consequence of this is the strengthening of the theorem that justifies choosing the LHS ensemble based on symmetry of the bipartite state. As a more practical application, we study the classic problem of the steerability of two-qubit Werner states with POVMs. We show strong numerical evidence that these states are unsteerable with POVMs up to a mixing probability of $\frac{1}{2}$ within an accuracy of $10^{-3}$. 
\end{abstract}


\maketitle

\textit{Introduction.}
Ever since its first examination by Einstein, Podolsky, and Rosen (EPR) in 1935~\cite{epr1935}, quantum nonlocality has been a puzzling phenomenon. In the EPR thought experiment, one observer, Alice, can perform a measurement on her half of an entangled pair to steer the other half (that belongs to a distant observer, Bob) to ensembles that conflict with the very intuition of classical locality~\cite{schrodinger1936}. This conflict was so profound that it prompted EPR to conclude that quantum theory was ``incomplete''~\cite{epr1935} and caused the longest debate in the history of quantum mechanics~\cite{bell2004speakable}. A more ``complete'' theory would be supplemented by hidden variables; however, the seminal work of Bell~\cite{bell1964} demonstrated that no such theory, when constrained by locality, is capable of explaining all quantum mechanical predictions for bipartite systems. Nowadays, quantum nonlocality is perceived as one of the hallmarks of quantum theory that sets it apart from classical notions and underlies numerous quantum information applications~\cite{brunner2014a}.

Bell's work defined the first class of quantum nonlocality, now known as Bell nonlocality~\cite{brunner2014a}. Some 25 years later, Werner realised that Bell nonlocality and entanglement (nonseparability) were in fact two independent forms of quantum nonlocality~\cite{werner1989quantum}. In 2007, Wiseman, Jones and Doherty~\cite{wiseman2007steering} recognised that the original idea of the EPR thought experiment is actually best captured by yet another form of quantum nonlocality -- quantum steerability. Since then, quantum steerability has been successfully demonstrated experimentally in loophole-free tests~\cite{bennet2012,smith2012,wittmann2012}. It has been employed in a range of practical quantum information tasks, including quantum cryptography~\cite{branciard2012}, randomness certification~\cite{law2014,passaro2015}, and self-testing~\cite{supic2016,kashefi2017}. 
 
Among this surge of discoveries, a fundamental question remains: which bipartite states manifest quantum steerability? In fact, determining the steerability of a bipartite state when considering all possible measurements, i.e., positive operator valued measure (POVM) measurements, has been such a challenging task that it is unanswered for even the simplest case of two-qubit Werner states~\cite[Problem 39]{oqp}. The problem remains open in spite of many significant advances towards understanding quantum steering under particular subsets of POVMs, e.g., with projection valued measure (PVM) measurements~\cite{wiseman2007steering,jevtic2015einstein,nguyen2016necessary}, with finite subsets of POVMs~\cite{cavalcanti2016lhsmalg,hirsch2016lhsmalg}, and with highly noisy POVMs or highly noisy states~\cite{quintino2015wernerlhsm}.

{In this Letter, we are concerned with the problem of quantum steering with POVMs for bipartite systems of arbitrary (but finite) dimension. We demonstrate that for a given choice of local hidden state ensemble, the task of determining whether a quantum state is steerable can be considered as a nesting problem of two convex objects. As a consequence, we derive an inequality which allows a test of steerability for all measurements. Surprisingly, the inequality also reveals a fundamental aspect of quantum steering. Namely, in quantum steering, the choice of local hidden variable is no longer arbitrary as in Bell nonlocality, but can be limited to the set of Bob's pure states. This in fact makes the study of quantum steering significantly simpler than its partner Bell nonlocality. In particular, one can strengthen the theorem (Lemma 1 of Ref.~\cite{wiseman2007steering}) which limits the choice of local hidden state ensemble based on the symmetry of the state. As the first application, we then apply the inequality to study the steerability of the two-qubit Werner states. 
Contrary to the fact that general POVMs provide an advantage over PVMs in many situations~\cite{vertesi2010bellpovm,hirsch2013,acin2016}, we provide strong numerical evidence that POVMs and PVMs are in fact equivalent for steering two-qubit Werner states.}

\textit{Quantum steerability.}
Suppose Alice and Bob share a bipartite quantum state $\rho$ over the finite-dimensional Hilbert space $\H_A \otimes \H_B$. We use $\A^H$ (or $\B^H$) to denote the space of Hermitian operators  over $\H_A$ (or $\H_B$). A POVM measurement with $n$ outcomes ($n$-POVM) $E$ implemented by Alice is an (ordered) collection of $n$ positive operators, $E= \{E_i\}_{i=1}^{n}$ with $E_i \in \A^H$, $E_i \ge 0$ and $\sum_{i=1}^{n} E_i = \II_A$, where $\II_A$ is the identity operator on $\A^H$. On performing the measurement, Alice steers Bob's system to the \emph{steering ensemble}  $\{\Tr_A [\rho (E_i \otimes \II_B)]\}_{i=1}^{n}$. However, despite the arbitrary choice of measurements, for certain bipartite states, the steering experiment can be locally simulated. More specifically, let $u$ be an ensemble (that is, a probability distribution) on the set of Bob's pure states, denoted by $\mathcal{S}_B$. A state $\rho$ is then called $u$-\emph{unsteerable} (always considered from Alice's side) with respect to $n$-POVMs if, for any $n$-POVM $E$, Alice can find $n$ response functions $G_i$ (with $G_i(P) \ge 0$ and $\sum_{i=1}^{n} G_i(P)= 1$ for all $P \in \mathcal{S}_B$) such that the steering ensemble can be simulated via a \emph{local hidden state model}~\cite{wiseman2007steering},
\begin{equation}
\Tr_{A}[\rho (E_i \otimes \II_B)]= \int \d \omega (P) u(P) \,\,  G_i(P) P,
\label{eq:unsteerable_def}
\end{equation}
where the integral is taken over the Haar measure $\omega$ on Bob's pure states $\S_B$. 
Equation~\eqref{eq:unsteerable_def} ensures that Bob, when performing state tomography conditioned on Alice's outcomes, obtains the same result as if Alice were steering his system~\cite{wiseman2007steering}.
The ensemble $u$ is called a \emph{local hidden state} (LHS) ensemble. In principle, the domain of the ensemble $u$ can be extended to mixed states. However, as a mixed state can be written as a convex combination of pure states, restricting the domain of $u$ to pure states causes no loss of generality.
We say then that $\rho$ is unsteerable with $n$-POVMs if there exists $u$ such that $\rho$ is $u$-unsteerable with $n$-POVMs. 

We note that this definition of quantum steering is slightly different from the original definition~\cite{wiseman2007steering}. In the latter, the LHS ensemble is indexed by a local hidden variable. We will prove the two definitions are equivalent as parts of our results. Our seemingly minor simplification in fact has very important consequences, which will be discussed below.


{\it The set of $n$-POVMs and its geometry.} 
The key idea in our approach is that an $n$-POVM $E$ can be thought of as a point in the real vector space of composite operators $(\A^H)^{\oplus n} = \oplus_{i=1}^{n} \A^H$. We therefore write $E=\oplus_{i=1}^{n} E_i$, which explicitly indicates that it is a composite operator in $(\A^H)^{\oplus n}$ with components $E_i$ each bounded by $0 \leq E_i \leq \II_A$. The space $(\A^H)^{\oplus n}$ can be made Euclidean by defining an inner product $\dprod{X}{Y} = \sum_{i=1}^{n} \dprod{X_i} {Y_i}$ for any composite operators $X$ and $Y$, where $\dprod{X_i}{Y_i}$ denotes the Hilbert--Schmidt inner product of $\A^H$, $\dprod{X_i}{Y_i} = \Tr (X^{\dagger}_iY_i)$. The set of $n$-POVMs is then a convex and compact subset of this space~\cite{Chiribella2007a}, which we denote by $\M^n$. Since $\sum_{i=1}^{n} E_i = \II_A$, $\M^n$ in fact belongs to the linear manifold $\P^n=\{X \vert \sum_{i=1}^n X_i=\II_A\}$.

\begin{figure}[!ht]
\begin{minipage}[c]{0.15\textwidth}
\includegraphics[width=0.9\textwidth]{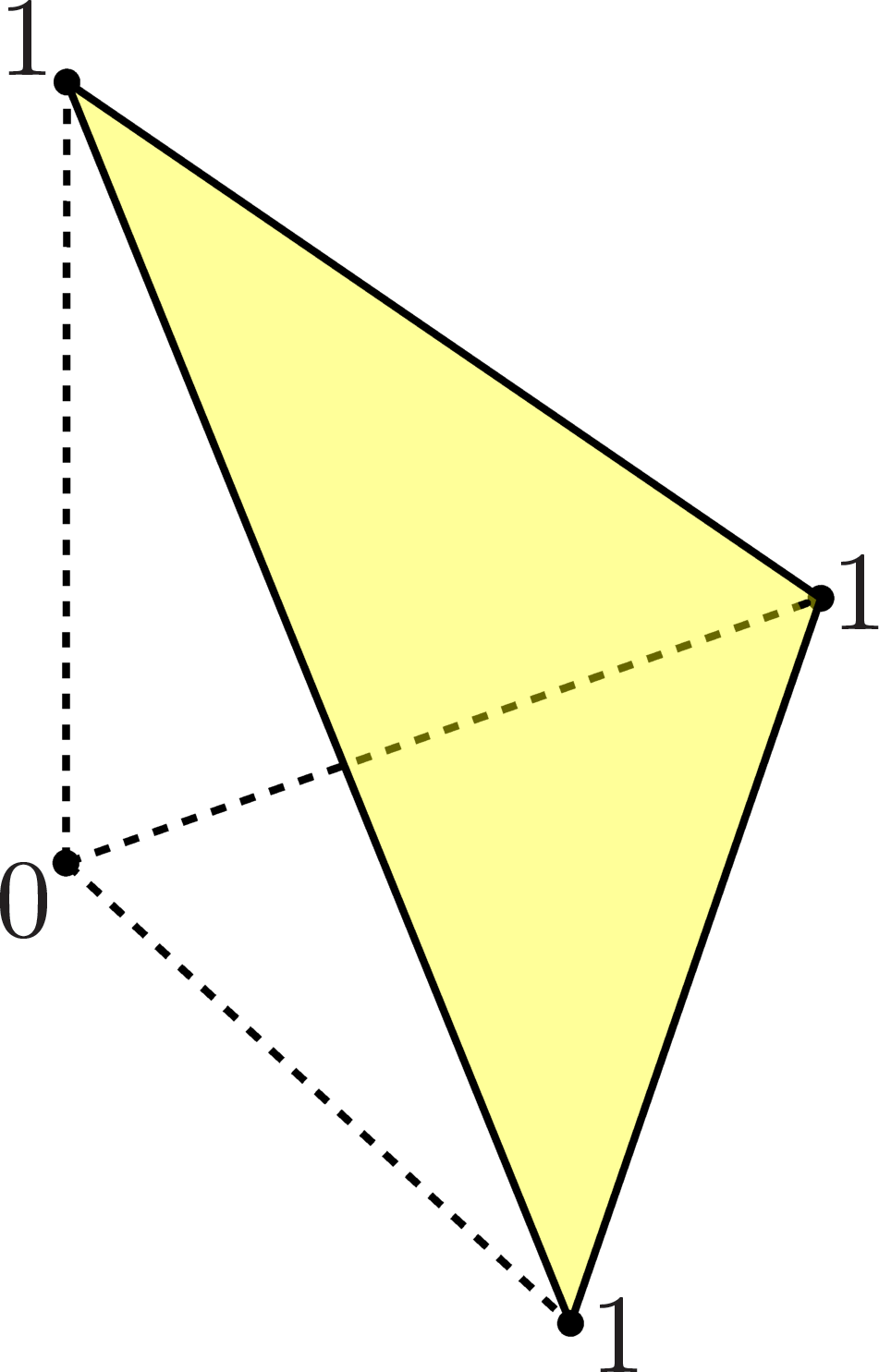}
\end{minipage} 
\hspace{0.09\textwidth}
\begin{minipage}{0.15\textwidth}
\includegraphics[width=0.9\textwidth]{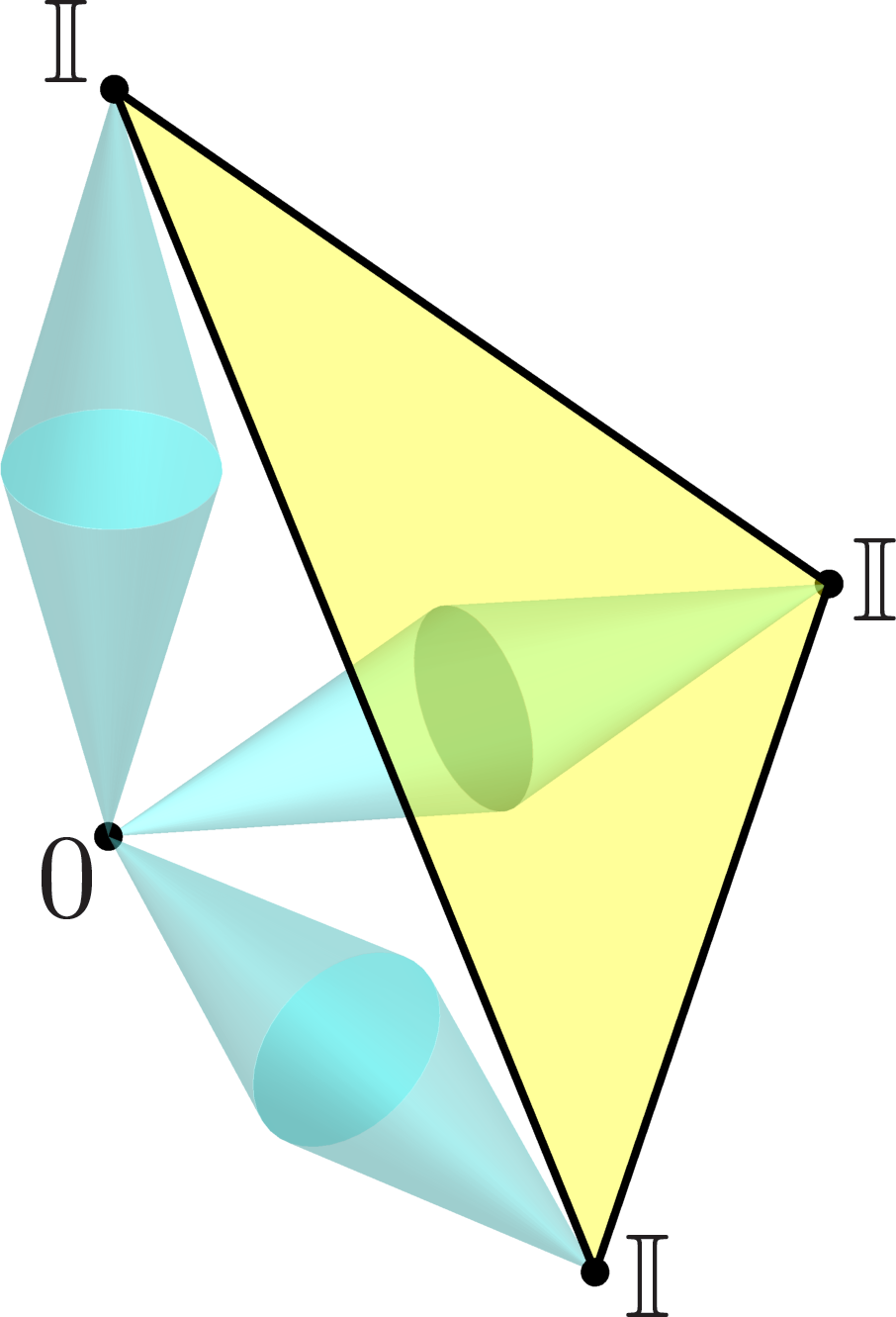}
\end{minipage} 
\caption{(Colour online) The similarity between a probability simplex and the set of 3-POVMs.} 
\label{fig:n-POVMs}
\end{figure}

While the set of POVMs $\M^n$ is perhaps unfamiliar, it is similar to the classical probability simplexes.
Figure~\ref{fig:n-POVMs} illustrates this similarity when $n=3$. The basis to construct a probability simplex is the probability range $[0,1]$. To obtain a probability simplex in $\mathbb{R}^3$, one aligns $3$ probability ranges $[0,1]$ along the $3$ axes and forms the triangle with vertices $(1,0,0)$, $(0,1,0)$, $(0,0,1)$. To construct the set of $3$-POVMs, the probability range $[0,1]$ is replaced by the set $0 \le X \le \II_A$, which for qubits forms a double cone in $\A^H$ (illustrated in Figure~\ref{fig:n-POVMs})~\cite{nguyen2016nonseparability}. One then `aligns' these $3$ sets $0 \le X \le \II_A$ along the three orthogonal component spaces of $(\A^H)^{\oplus 3}$. The set of $3$-POVMs is formed in between the points $(\II_A,0,0)$, $(0,\II_A,0)$, $(0,0,\II_A)$. This analogy between a probability simplex and the set of $n$-POVMs applies in the same way for any $n$. There is, however, a crucial difference between classical probability simplexes and sets of POVMs: while $[0,1]$ is $1$-dimensional with $2$ extreme points $0$ and $1$, the set $0 \le X \le \II_A$ is generally high-dimensional and with more extreme points other than $0$ and $\II_A$. As a result, the set of $n$-POVMs is also of high dimension and carries other extreme points apart from the \emph{special} ones at the `corners', which are of the form $\oplus_{i=1}^{n} \delta_{ik} \II_A $ with $k=1,2,\ldots n$. 

{\it The steering assemblage of $n$-POVMs.}
Now each POVM measurement $E$ performed on Alice's side gives rise to a steering ensemble on Bob's side, \mbox{$\oplus_{i=1}^{n}\Tr_{A}[\rho (E_i \otimes \II_B)]$}. This is most easily implemented by the concept of a steering function $\rho^{A \to B}: \A^H \to \B^H$, which maps $X \in \A^H$ to $\Tr_A [\rho (X \otimes \II_B)] \in \B^H$~\cite{nguyen2016nonseparability}. This induces the map $(\rho^{A \to B} )^{\oplus n}: (\A^H)^{\oplus n} \to (\B^H)^{\oplus n}$. For $X$ being an element or a subset of $\A^H$, we denote $X'= \rho^{A \to B} (X)$. The same notation is used for composite vectors, namely, for $X$ being an element or a subset of $(\A^H)^{\oplus n}$, $X'= (\rho^{A\to B})^{\oplus n} (X)$.

Geometrically, the map $(\rho^{A \to B})^{\oplus n}$ maps a point in the set of POVMs $\M^n$ to a point in $(\B^H)^{\oplus n}$. The set $(\M^n)'= (\rho^{A \to B})^{\oplus n} (\M^n)$ is called the \emph{steering assemblage of $n$-POVMs}. Being a linear image of $\M^n$, which is convex and compact~\cite{Chiribella2007a}, $(\M^n)'$ is also convex and compact. Moreover, since $\M^n$ belongs to $\P^n$, $(\M^n)'$ belongs to $(\P^n)'$. 

{\it The capacity of an ensemble of Bob's pure states.}
For an ensemble $u$ of Bob's pure states $\S_B$, the $n$-\emph{capacity} $\K^n(u)$ is the set of $n$-component ensembles that it can simulate. That is to say, the capacity $\K^n(u)$ is a subset of $(\B^H)^{\oplus n}$ consisting of composite operators $K= \oplus_{i=1}^n K_i$, each component being given by
\begin{equation}
K_i = \int \d \omega (P) u(P) G_i (P) P,
\end{equation}
with all possible choices of response functions $G_i$ that satisfy $G_i(P) \ge 0$, $\sum_{i=1}^{n} G_i(P)=1$. It is easy to show that the $n$-capacity $\K^n(u)$ is also a convex compact set, which has $n$ \emph{special extreme points} of the form $\oplus_{i=1}^n \delta_{ik} \int \d \omega (P) u(P) P$ with $k=1,2,\ldots,n$. 

{\it Steerability as a nesting problem.}
With the above definitions, the following lemma is obvious.
\begin{lemma} 
A state $\rho$ is $u$-unsteerable with $n$-POVMs if and only if $(\M^n)' \subseteq \K^n(u)$. 
\label{lem:steerable_nesting}
\end{lemma}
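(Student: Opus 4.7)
The plan is to verify that the lemma is essentially a direct rephrasing of the definition of $u$-unsteerability in the language of the geometric objects $(\M^n)'$ and $\K^n(u)$ that were introduced just above the statement. No nontrivial machinery should be required; the proof is a two-line equivalence chain, and the only task is to make the correspondence between quantifiers and set inclusion explicit.

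First I would rewrite the $u$-unsteerability condition in a compact form. The condition~\eqref{eq:unsteerable_def} says that for every $n$-POVM $E \in \M^n$ there exist response functions $\{G_i\}_{i=1}^n$ such that, component-wise,
\begin{equation*}
\Tr_A[\rho(E_i \otimes \II_B)] = \int \d\omega(P)\, u(P)\, G_i(P)\, P.
\end{equation*}
The left-hand side is, by definition of the steering function, the $i$-th component of $(\rho^{A \to B})^{\oplus n}(E) = E' \in (\M^n)'$. The right-hand side is, by the very definition of $\K^n(u)$, the $i$-th component of an element of $\K^n(u)$. Hence the condition~\eqref{eq:unsteerable_def} for a fixed $E$ is equivalent to $E' \in \K^n(u)$.

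For the forward direction, assume $\rho$ is $u$-unsteerable with $n$-POVMs. Then for every $E \in \M^n$ the above rephrasing gives $E' \in \K^n(u)$; since $E$ ranges over all of $\M^n$, this is exactly $(\M^n)' \subseteq \K^n(u)$. For the converse, assume $(\M^n)' \subseteq \K^n(u)$ and pick any $E \in \M^n$. Then $E' \in (\M^n)' \subseteq \K^n(u)$, so by definition of $\K^n(u)$ there exist valid response functions $\{G_i\}$ realising $E'$ as an element of $\K^n(u)$, which is precisely~\eqref{eq:unsteerable_def}.

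There is no real obstacle here; the only thing to be careful about is that the response functions $G_i$ appearing in the definition of $u$-unsteerability are allowed to depend on $E$, which matches the clause ``for each element of $\K^n(u)$ there exist $G_i$'' in the definition of the capacity. Once this is observed, both implications are immediate and the lemma follows.
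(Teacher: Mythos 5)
Your proof is correct and is exactly the argument the paper has in mind: the paper states the lemma as ``obvious'' with no written proof, because unfolding the definitions of $(\M^n)'$ and $\K^n(u)$ turns the $u$-unsteerability condition~\eqref{eq:unsteerable_def}, quantified over all $E \in \M^n$, directly into the inclusion $(\M^n)' \subseteq \K^n(u)$. Your write-up simply makes this definitional equivalence explicit, including the correct observation that the $E$-dependence of the response functions matches the existential quantifier in the definition of the capacity.
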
 

We first consider the special extreme points of $\M^n$. It is easy to show that for their steering images to be in $\K^n(u)$, one has
\begin{equation}
\int \d \omega (P) u(P) P = \II_A',
\label{eq:LHS_normalisation}
\end{equation}
which is referred to as \emph{the minimal requirement} for $u$~\footnote{Since if $\int \d \omega (P) u(P) G_i(P) P = \II_A'$ for some response function $G_i$, then by taking the trace of this equality, one obtains $\int \d \omega (P) u(P) G_i(P) = 1$. Because $G_i(P) \ge 0$, this tells us that $G_i(P)=1$ almost everywhere with respect to the measure generated by $u$. This minimal requirement can also be obtained by summing both sides of equation~\eqref{eq:unsteerable_def} over $i$, bearing in mind the constraints $\sum_{i=1}^{n} E_i = \II_A$ and $\sum_{i=1}^{n} G_i (P)=1$. It can also be thought as the condition for $(\M^1)' \subseteq \K^1(u)$}.

Once reformulated in terms of a nesting problem of convex objects (Lemma \ref{lem:steerable_nesting}), one can apply nesting criteria to test steerability. The following lemma is such a nesting criterion based on a duality representation.
\begin{lemma}[Nesting criterion by duality]
Let $\X$ be a convex compact subset of a finite-dimensional Euclidean space. Then a compact subset $\Y$ is contained in $\X$ if and only if $ \max_{X \in \X} \dprod{Z}{X} \ge \max_{Y \in \Y} \dprod{Z}{Y}$ for all vectors $Z$ in the space. 
\label{lem:first_nesting}
\end{lemma}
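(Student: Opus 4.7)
The plan is to split the statement into the trivial forward direction and the nontrivial converse, with the converse resting on the separating hyperplane theorem in a finite-dimensional Euclidean space.

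For the forward direction, I would assume $\Y \subseteq \X$. Then for any vector $Z$ the set $\{\dprod{Z}{Y} : Y \in \Y\}$ is contained in $\{\dprod{Z}{X} : X \in \X\}$, and since both sets are compact (as continuous images of compact sets), the maxima are attained and the inclusion immediately gives $\max_{X \in \X} \dprod{Z}{X} \ge \max_{Y \in \Y} \dprod{Z}{Y}$. This half uses neither convexity of $\X$ nor compactness of $\Y$ in any essential way; compactness only ensures that the maxima exist as actual maxima rather than suprema.

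For the converse I would argue by contrapositive: assume $\Y \not\subseteq \X$ and exhibit a direction $Z$ that violates the inequality. Pick any $Y_0 \in \Y \setminus \X$. Since $\X$ is convex and compact — hence a closed convex set in a finite-dimensional Euclidean space — and $\{Y_0\}$ is a compact convex set disjoint from $\X$, the separating hyperplane theorem yields a vector $Z$ and a scalar $c$ such that
\begin{equation*}
\dprod{Z}{X} \le c < \dprod{Z}{Y_0} \qquad \text{for every } X \in \X.
\end{equation*}
Taking the maximum of the left-hand side over $X \in \X$ and using $\dprod{Z}{Y_0} \le \max_{Y \in \Y} \dprod{Z}{Y}$, I obtain $\max_{X \in \X} \dprod{Z}{X} \le c < \max_{Y \in \Y} \dprod{Z}{Y}$, which is exactly the failure of the stated inequality at this $Z$.

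There is no real obstacle here — this is a standard convex-duality argument — but one point I want to get right is the asymmetric use of the hypotheses: it is $\X$ (not $\Y$) that must be closed and convex, because in the contrapositive we separate a single point of $\Y$ from the closed convex set $\X$. Compactness of $\X$ is stronger than needed (closedness would suffice for the separation), while compactness of $\Y$ is used only implicitly when quoting the hypothesis in the form "$\max_{Y \in \Y}$"; the proof in fact only requires attainment of $\dprod{Z}{Y_0}$, which is automatic.
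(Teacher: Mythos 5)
Your proof is correct and follows essentially the same route as the paper's: the forward direction is immediate, and the converse separates a point of $\Y\setminus\X$ from the compact convex set $\X$ by a hyperplane (you phrase it as a contrapositive, the paper as a contradiction, which is the same argument). Your remarks on the asymmetric use of the hypotheses are accurate but do not change the substance.
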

The idea behind this lemma is that if $\X$ contains $\Y$, then its projection onto any direction contains that of $\Y$  and vice versa (see Figure~\ref{fig:nesting_duality}). A full proof is given in Appendix~\ref{sec:nesting_by_duality}.

\begin{figure}[!h]
    \begin{tikzpicture}
    \node at (0,0) {\includegraphics[width=0.2\textwidth]{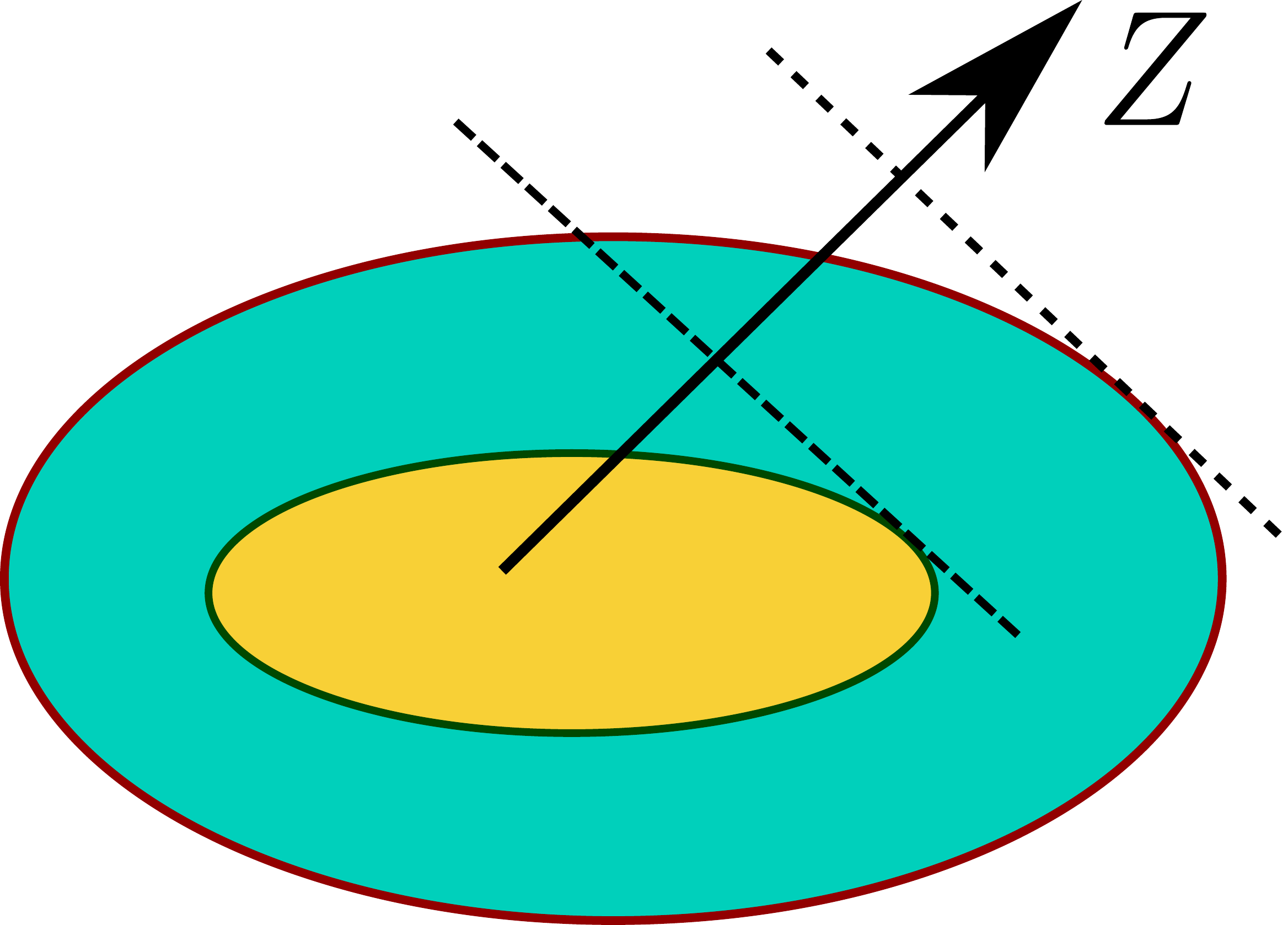}};
    \node at (0,-0.35) {$\Y$};
    \node at (0.25,-1) {$\X$};
    \end{tikzpicture}
    \caption{(Colour online) Nesting by duality.}
    \label{fig:nesting_duality}
\end{figure}

To apply this lemma with $\X = \K^n(u)$, $\Y=(\M^n)'$, we need to solve two maximisation problems: $\max_{K \in \K^n(u)} \dprod{Z}{K}$ and $\max_{E \in \M^n} \dprod{Z}{E'}$ for a given composite operator $Z$ in $(\B^H)^{\oplus n}$. While the latter is a semidefinite program, the former is a linear constrained maximisation, which can be solved explicitly (see Appendix~\ref{sec:central_maximisation} for the details):
\begin{equation}
\max_{K \in \K^n(u)} \dprod{Z}{K}  = \int \d \omega (P) u(P) \max_i \dprod{Z_i}{P}.
\label{eq:first_maximisation_solution}
\end{equation} 

From Lemma~\ref{lem:steerable_nesting} and Lemma~\ref{lem:first_nesting}, the following theorem immediately follows.

\begin{theorem} 
A bipartite state $\rho$ is $u$-unsteerable if and only if
\begin{equation}
\int \d \omega (P) u(P) \max_i \dprod{Z_i}{P} \ge \max_{E \in \M^n} \sum_{i=1}^{n} \dprod{Z_i}{E_i'}
\label{eq:ineq_main}
\end{equation}
for all composite operators $Z= \oplus_{i=1}^{n} Z_i$ in $(\B^H)^{\oplus n}$.
\label{th:main}
\end{theorem}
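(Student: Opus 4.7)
The plan is simply to chain Lemma~\ref{lem:steerable_nesting}, Lemma~\ref{lem:first_nesting} and equation~\eqref{eq:first_maximisation_solution}; all the substantive work has already been done in establishing those building blocks, so the proof of Theorem~\ref{th:main} is essentially an assembly. First I would use Lemma~\ref{lem:steerable_nesting} to replace the statement ``$\rho$ is $u$-unsteerable'' by the geometric condition $(\M^n)' \subseteq \K^n(u)$.

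Next I would apply Lemma~\ref{lem:first_nesting} to this containment, taking $\X = \K^n(u)$ and $\Y = (\M^n)'$ inside the Euclidean space $(\B^H)^{\oplus n}$. For this I need to check the hypotheses: $\K^n(u)$ was noted to be convex and compact in the discussion introducing the capacity, and $(\M^n)'$ is compact as the image of the compact set $\M^n$ under the continuous linear map $(\rho^{A\to B})^{\oplus n}$. Lemma~\ref{lem:first_nesting} then gives that $(\M^n)' \subseteq \K^n(u)$ is equivalent to
\begin{equation*}
\max_{K \in \K^n(u)} \dprod{Z}{K} \ge \max_{Y \in (\M^n)'} \dprod{Z}{Y}
\end{equation*}
for every $Z \in (\B^H)^{\oplus n}$.

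The final step is to evaluate the two support functions explicitly. The left-hand side is given directly by equation~\eqref{eq:first_maximisation_solution}, which produces the integral $\int \d \omega(P) u(P) \max_i \dprod{Z_i}{P}$ appearing on the left of~\eqref{eq:ineq_main}. The right-hand side is handled by pulling back through the linear map defining $(\M^n)'$: every $Y \in (\M^n)'$ is of the form $Y = E' = \oplus_{i=1}^n E_i'$ for some $E \in \M^n$, and by definition of the inner product on $(\B^H)^{\oplus n}$ we have $\dprod{Z}{E'} = \sum_{i=1}^n \dprod{Z_i}{E_i'}$, so $\max_{Y \in (\M^n)'} \dprod{Z}{Y} = \max_{E \in \M^n} \sum_{i=1}^n \dprod{Z_i}{E_i'}$. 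Substituting both evaluations yields precisely inequality~\eqref{eq:ineq_main}, which must hold for all $Z$; this is the content of Theorem~\ref{th:main}.

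There is no real obstacle here since the two lemmas and the central maximisation do all the work. The only care needed is the bookkeeping for the right-hand maximisation — namely verifying that the supremum over $(\M^n)'$ genuinely reduces to a supremum over $\M^n$ via the linear image, and that compactness of $(\M^n)'$ is in place so that Lemma~\ref{lem:first_nesting} may be invoked. Both are immediate from the finite-dimensional linear-algebraic setup.
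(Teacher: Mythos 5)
Your proposal is correct and follows exactly the paper's route: the paper also derives Theorem~\ref{th:main} by combining Lemma~\ref{lem:steerable_nesting} with Lemma~\ref{lem:first_nesting} (taking $\X=\K^n(u)$, $\Y=(\M^n)'$) and substituting the explicit support function of the capacity from equation~\eqref{eq:first_maximisation_solution}. Your added care about the compactness of $(\M^n)'$ and the reduction of the maximum over $(\M^n)'$ to one over $\M^n$ is sound bookkeeping, not a deviation.
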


Inequality~\eqref{eq:ineq_main} is the main result in this Letter: it is valid for systems of arbitrary dimension and POVMs of arbitrary number of outcomes. We now discuss some of its important consequences; details of proofs and further discussions are given in Appendix~\ref{sec:corollaries_main}.  

{Had one started with the original definition of quantum steering with an ``indexed LHS ensemble'', that is, an LHS ensemble indexed by some hidden variable, one would arrive at a similar inequality as~\eqref{eq:ineq_main}. In that case, the integration is taken over the hidden variable instead (see Appendix~\ref{sec:corollaries_main}). However, one can rewrite it as an integral over the push-forward measure over Bob's pure states $\S_B$~\cite{Bogachev2007a}. This implies that it is only the push-forward measure on $\S_B$ that really determines the capacity of an LHS ensemble. Two indexed LHS ensembles generating the same measure on Bob's pure states would have the same capacity. In other words, our definition of quantum steering where the local hidden variable is omitted is equivalent to the original definition of quantum steering (Corollary 1, Appendix~\ref{sec:corollaries_main}). Having eliminated the arbitrary choice of local hidden variable in quantum steering, the symmetry of the state directly has a stronger implication on the symmetry of LHS ensembles (see Theorem~\ref{th:second}). In fact, it is this stronger implication of symmetry that actually renders many unsolvable problems in Bell nonlocality solvable in the context of quantum steering~\cite{wiseman2007steering}.


More specifically, the state $\rho$ is said to have $(\G,U,V)$-symmetry with $\G$ being a group and $U$ and $V$ being its two representations on $\H_A$ and $\H_B$, respectively, if $\rho= U^{\dagger} (g) \otimes V^{\dagger}(g) \rho U (g) \otimes V(g)$ for all $g \in \G$. The action $V$ of $\G$ on Bob's pure states $\S_B$ generates an action $R_V$ on the space of distributions on $\S_B$ defined by $[R_V(g) u] (P) = u[V^\dagger(g) P V(g)]$. 
We then have a strengthened form of Lemma 1 of Ref.~\cite{wiseman2007steering} on the symmetry of LHS ensemble. 
\begin{theorem}[Symmetry of LHS ensembles]
For a given state $\rho$ which is $(\G,U,V)$-symmetric with a compact group $\G$, if $\rho$ is unsteerable with $n$-POVMs then it admits an LHS ensemble $u^{\ast}$ which is $(\G,R_V)$-invariant, i.e., $u^{\ast}=R_V(g) u^{\ast}$ for all $g$ in $\G$.
\label{th:second}
\end{theorem}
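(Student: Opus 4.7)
The strategy is the standard Haar-averaging trick, which the nesting formulation of Theorem~\ref{th:main} makes particularly clean. Since $\G$ is compact, it carries a normalised, bi-invariant Haar measure $dg$. By assumption there exists some LHS ensemble $u$ witnessing the unsteerability of $\rho$, and I would define the symmetrised candidate
\begin{equation}
u^{\ast} = \int_{\G} dg \, R_V(g)\, u .
\end{equation}
Then $(\G,R_V)$-invariance, $R_V(h) u^{\ast} = u^{\ast}$, is immediate from left-invariance of $dg$. What requires work is checking that $u^{\ast}$ is still an LHS ensemble for $\rho$, which I would do by verifying inequality~\eqref{eq:ineq_main} for $u^{\ast}$.

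Fix any $Z \in (\B^H)^{\oplus n}$. On the left-hand side of~\eqref{eq:ineq_main} with $u^{\ast}$ I would apply Fubini to swap the $\G$- and $\S_B$-integrations and then change variable $P \mapsto V(g) P V^{\dagger}(g)$ on $\S_B$, which is legal because $\omega$ is $V$-invariant. This rewrites the left-hand side as $\int_{\G} dg \int d\omega(P)\, u(P) \, \max_i \dprod{V^{\dagger}(g) Z_i V(g)}{P}$. For each fixed $g$, Theorem~\ref{th:main} applied to the ensemble $u$ with the rotated composite operator $Z^{(g)} = \bigoplus_i V^{\dagger}(g) Z_i V(g)$ bounds the inner integral below by $\max_{E \in \M^n} \sum_i \dprod{V^{\dagger}(g) Z_i V(g)}{E_i'}$.

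The key step is to use the $(\G,U,V)$-symmetry of $\rho$ to show that this $g$-rotated maximum equals the original $\max_{E \in \M^n} \sum_i \dprod{Z_i}{E_i'}$. A short manipulation substituting $\rho = U^{\dagger}(g) \otimes V^{\dagger}(g)\,\rho\,U(g) \otimes V(g)$ into $\dprod{V^{\dagger}(g) Z_i V(g)}{E_i'} = \Tr[\rho (E_i \otimes V^{\dagger}(g) Z_i V(g))]$ and using cyclicity of the trace gives, termwise, $\dprod{V^{\dagger}(g) Z_i V(g)}{E_i'} = \dprod{Z_i}{(U(g) E_i U^{\dagger}(g))'}$. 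Since $E \mapsto \{U(g) E_i U^{\dagger}(g)\}_{i}$ is a bijection of $\M^n$ onto itself, the maximum over $E$ is $g$-independent. Integrating over $\G$ (which contributes total mass $1$) then delivers inequality~\eqref{eq:ineq_main} for $u^{\ast}$, and Theorem~\ref{th:main} concludes that $\rho$ is $u^{\ast}$-unsteerable.

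The main obstacle --- really the single substantive calculation --- is the symmetry step that trades $V^{\dagger}(g)(\cdot)V(g)$ on Bob's side for $U(g)(\cdot)U^{\dagger}(g)$ on Alice's side; everything else is formal manipulation of integrals. This is precisely where working with ensembles on $\S_B$ (rather than with indexed LHS models and response functions) pays off: in the original formulation one would additionally have to symmetrise the response functions $G_i$, which is the extra difficulty that Lemma 1 of Ref.~\cite{wiseman2007steering} negotiated under further assumptions, whereas here the nesting criterion of Theorem~\ref{th:main} bypasses the issue entirely.
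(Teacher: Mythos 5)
Your proposal is correct and takes essentially the same route as the paper's Proof~1: Haar-average the witnessing ensemble over the compact group and verify inequality~\eqref{eq:ineq_main} for the average, the substantive step being exactly the paper's Corollary~3 manipulation that uses $\rho= U^{\dagger} (g) \otimes V^{\dagger}(g) \rho U (g) \otimes V(g)$, cyclicity of the trace, and the invariance of $\M^n$ under $E_i \mapsto U(g)E_iU^{\dagger}(g)$ to show the maximum over $E$ is $g$-independent, together with the $V$-invariant change of variables on $\S_B$. The only difference is packaging: the paper modularises the argument into convexity of $\Omega^n(\rho)$ (Corollary~2) plus $\G$-symmetry of $\Omega^n(\rho)$ (Corollary~3) and then averages, whereas you fold both into a single Fubini computation, which if anything treats the continuous mixture slightly more explicitly.
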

This theorem is applicable as well when measurements are restricted to PVMs. To understand the difference with Lemma 1 of Ref.~\cite{wiseman2007steering}, we consider the example where $\G$ acts transitively on Bob's pure states $\S_B$ (i.e., a single orbit covers all of $\S_B$).  If $\rho$ is unsteerable, Lemma 1 of Ref.~\cite{wiseman2007steering} then states the existence of an indexed LHS ensemble, on which $\G$ acts covariantly on the indices and the states. However, due to the arbitrariness in choice of the local hidden variable, there exist in fact infinitely many different $\G$-covariant indexed LHS ensembles (see Appendix~\ref{sec:corollaries_main}). One then could not single out an unique choice of LHS ensemble. On the other hand, under the same conditions, Theorem~\ref{th:second} implies that the state is unsteerable with the unique uniform distribution on Bob's pure states as an LHS ensemble.
}


Beyond revealing very general aspects of quantum steering, Theorem~\ref{th:main} can also be used to test steerability in practice. 
The most difficult part is to determine the existence of a LHS ensemble $u$. 
Even when measurements are limited to PVMs, the question is so far solved only for highly symmetric states, e.g., the Werner state~\cite{wiseman2007steering} and the two-qubit $T$-state (mixtures of Bell states)~\cite{jevtic2015einstein,nguyen2016necessary}. 
The implication of our approach on this problem will be discussed elsewhere. 
However, as we mentioned, even when $u$ is known, the problem of determining steerability with POVMs is still open~\cite[Problem 39]{oqp}.
It is this latter problem that we are concerned with in the following. 
We will show that Theorem~\ref{th:main} can provide a strong numerical evidence for steering with POVMs with a given identified candidate for the LHS ensemble. 

{\it The gap function.}
We first note that for a LHS ensemble $u$ satisfying the minimal requirement~\eqref{eq:LHS_normalisation}, the inequality~\eqref{eq:ineq_main} is invariant with respect to the transformation $\oplus_{i=1}^{n} Z_i \to \frac{1}{\sqrt{D}} \oplus_{i=1}^{n} (Z_i-C)$, where $C=\frac{1}{n}\sum_{i=1}^{n} Z_i$ and $D= \sum_{i=1}^{n} \dprod{Z_i-C}{Z_i-C}$. We can therefore restrict $Z$ to the set of those satisfying $\sum_{i=1}^{n} Z_i= 0$ and $\sum_{i=1}^{n} \dprod{Z_i}{Z_i}=1$, denoted by $\C^n$.
For clarity, we introduce the \emph{gap function} $\Delta [(\M^n)',\K^n(u)]$, defined to be 
\begin{align}
\min_{Z \in \C^n} \left\{ \int \d \omega (P) u(P) \max_i \dprod{Z_i}{P} - \max_{E \in \M^n} \sum_{i=1}^{n} \dprod{Z_i}{E_i'} \right\}.
\label{eq:first_criterion}
\end{align}
The gap function characterises the gap between the boundary of $(\M^n)'$ and that of $\K^n(u)$ from inside. The state $\rho$ is $u$-unsteerable with $n$-POVMs if and only if $\Delta [(\M^n)',\K^n(u)] \ge 0$.  

{\it Restriction to rank-$1$ POVMs.}
As all POVMs can be post-processed from those of rank-$1$, to test quantum steerability we can concentrate on the latter~\cite{barrett2002}.  To this end, we define $\mathcal{\widetilde{\M}}^n=\{ \oplus_{i=1}^{n} \alpha_i P_i\}$ where $P_i$ are (not necessarily independent) rank-$1$ projections and $0 \le \alpha_i \le 1$ such that $\sum_{i=1}^{n} \alpha_i P_i = \II_A$. To test the steerability with $\M^n$, we therefore only need to calculate $\Delta [(\widetilde{\M}^n)',\K^n(u)]$.

{\it Steerability of two-qubit Werner states with POVMs.}
Consider the two-qubit Werner state,
\begin{equation}
W_p= p \ketbra{\psi^-}{\psi^-} + (1-p) \frac{\II_A}{2} \otimes \frac{\II_B}{2},
\label{eq:werner}
\end{equation}
which is a mixing between the singlet Bell state $\ket{\psi^-} = \frac{1}{\sqrt{2}} (\ket{01}-\ket{10})$ and the maximally mixed state with mixing parameter $p$ ($0 \le p \le 1$). When restricted to PVMs, by explicitly constructing the response functions, it was shown that the Werner state is unsteerable for the mixing probabilities $p \le \frac{1}{2}$~\cite{wiseman2007steering}. By adapting Barrett's model~\cite{barrett2002} of local hidden variables, it was further possible to show that Werner states are unsteerable with POVMs when $p \le \frac{5}{12}$~\cite{quintino2015wernerlhsm}. We are to study the conjecture~\cite{werner2014steering}:
\begin{conjecture}
The Werner state with mixing parameter $p \le \frac{1}{2}$ is unsteerable for all $n$-POVMs. That is to say, POVMs and PVMs are equivalent for steering two-qubit Werner states.
\label{conj:werner_steering}
\end{conjecture}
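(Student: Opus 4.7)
The plan is to apply Theorem~\ref{th:main} with the LHS ensemble $u^*$ selected by the symmetry of the Werner state, and then numerically verify non-negativity of the resulting gap function~\eqref{eq:first_criterion} at $p = \tfrac{1}{2}$.

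First I fix the LHS candidate via symmetry. The Werner state~\eqref{eq:werner} satisfies $W_p = (U \otimes U^*)^{\dagger} W_p (U \otimes U^*)$ for all $U \in \SU(2)$, so it is $(\SU(2),U,U^*)$-symmetric. Because $\SU(2)$ (acting through $U^*$) is transitive on Bob's Bloch sphere, Theorem~\ref{th:second} forces the LHS ensemble to be the unique $\SU(2)$-invariant distribution, namely the uniform Haar measure $u^*$. The minimal requirement~\eqref{eq:LHS_normalisation} holds automatically: $\int \d \omega(P)\, P = \II_B / 2 = W_p^{A \to B}(\II_A)$. This is consistent with the PVM result of Ref.~\cite{wiseman2007steering}, where $u^*$ was already shown to be a valid LHS ensemble at $p = \tfrac{1}{2}$.

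Next I set up the explicit gap function $\Delta[(\wt{\M}^n)', \K^n(u^*)]$, using the rank-$1$ POVM reduction. Writing all qubit operators in the Pauli basis, $Z_i = z_i^0 \II + \vec{z}_i \cdot \vec{\sigma}$ and $E_i = \tfrac{\alpha_i}{2}(\II + \vec{n}_i \cdot \vec{\sigma})$ with $|\vec{n}_i| = 1$, $0 \le \alpha_i \le 1$, $\sum_i \alpha_i = 2$ and $\sum_i \alpha_i \vec{n}_i = 0$, the first term in~\eqref{eq:first_criterion} becomes the spherical average
\begin{equation}
\frac{1}{4\pi} \int_{S^2} \d \vec{m} \, \max_i \bigl(z_i^0 + \vec{z}_i \cdot \vec{m}\bigr),
\end{equation}
which is evaluable either by partitioning $S^2$ into the spherical polygons on which each index achieves the maximum, or to arbitrary accuracy by quadrature. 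A direct computation with $W_{1/2}^{A \to B}(E_i) = \tfrac{\alpha_i}{4}\bigl(\II - \tfrac{1}{2}\vec{n}_i \cdot \vec{\sigma}\bigr)$ turns the second term into the constrained linear maximum $\max \sum_i \tfrac{\alpha_i}{2}\bigl(z_i^0 - \tfrac{1}{2}\vec{z}_i \cdot \vec{n}_i\bigr)$; keeping $0 \le E_i \le \II_A$ in operator form casts this as a standard semidefinite program.

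The main obstacle is that non-negativity of $\Delta$ must hold for every $n$ and every $Z \in \C^n$, yet $\Delta$ is a difference of two convex functions of $Z$ and hence generically non-convex. My plan is a numerical search: for each $n$ of modest size I would minimize $\Delta$ over $\C^n$ by gradient-based methods with many random restarts, gauging away the residual $\mathrm{SO}(3)$ covariance that jointly rotates the $\vec{z}_i$. To justify stopping at small $n$ I would invoke the structural fact that extreme qubit POVMs have at most $d^2 = 4$ non-zero effects, supplemented by extensive sampling of larger $n$ as a sanity check. Success consists in showing that the minimum of $\Delta$ stays above $-10^{-3}$ across all probed $n$; this yields strong numerical evidence for the conjecture but stops short of a rigorous proof. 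An analytical resolution would apparently require either an explicit rank-$1$ LHS construction extending Barrett's model~\cite{barrett2002}, or a deeper exploitation of the $\SU(2)$ symmetry that reduces the $n$-dependent SDP to a tractable closed form — this seems to be the essential remaining difficulty.
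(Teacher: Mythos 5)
Your plan matches the paper's approach essentially step for step: Theorem~\ref{th:second} fixes the uniform LHS ensemble, the gap function~\eqref{eq:first_criterion} is minimized numerically over rank-$1$ four-outcome POVMs (since extremal qubit POVMs have at most four effects, as the paper also notes), and the result is strong numerical evidence rather than a proof --- which is all the paper offers as well, the statement being a Conjecture (the paper uses simulated annealing with Lebedev quadrature where you propose multi-start gradient descent, a purely technical difference). One small correction: the singlet-based Werner state is invariant under $U \otimes U$ (the singlet acquires only the phase $\det U$), not $U \otimes U^{*}$, though either way the action on Bob's Bloch sphere is transitive, so the conclusion that the uniform distribution is the unique invariant LHS candidate is unaffected.
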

Although further analyses restricted to finite subsets of POVMs~\cite{cavalcanti2016lhsmalg,hirsch2016lhsmalg} or POVMs with special symmetry~\cite{werner2014steering} support unsteerability of the Werner state for $\frac{5}{12} \le p \le \frac{1}{2}$~\cite{bavaresco2017finitepovms,werner2014steering}, there has not been a concrete evidence when one considers all POVMs. For $n=1$, the conjecture is obvious. For $n=2$, it has been proven by demonstrating the equivalence to steering with PVMs~\cite{nguyen2016nonseparability}. The proof for $n=3$ is also known~\cite{werner2014steering}. Finally, it is known that it is sufficient to consider the conjecture for $n = 4$~\cite{d2005classical,werner2014steering}. Here, by computing the corresponding gap function for the Werner state, we provide strong numerical evidence for Conjecture~\ref{conj:werner_steering} for $n=4$.  

\begin{figure}[!t]
	\begin{center}
		\includegraphics[width=0.45\textwidth]{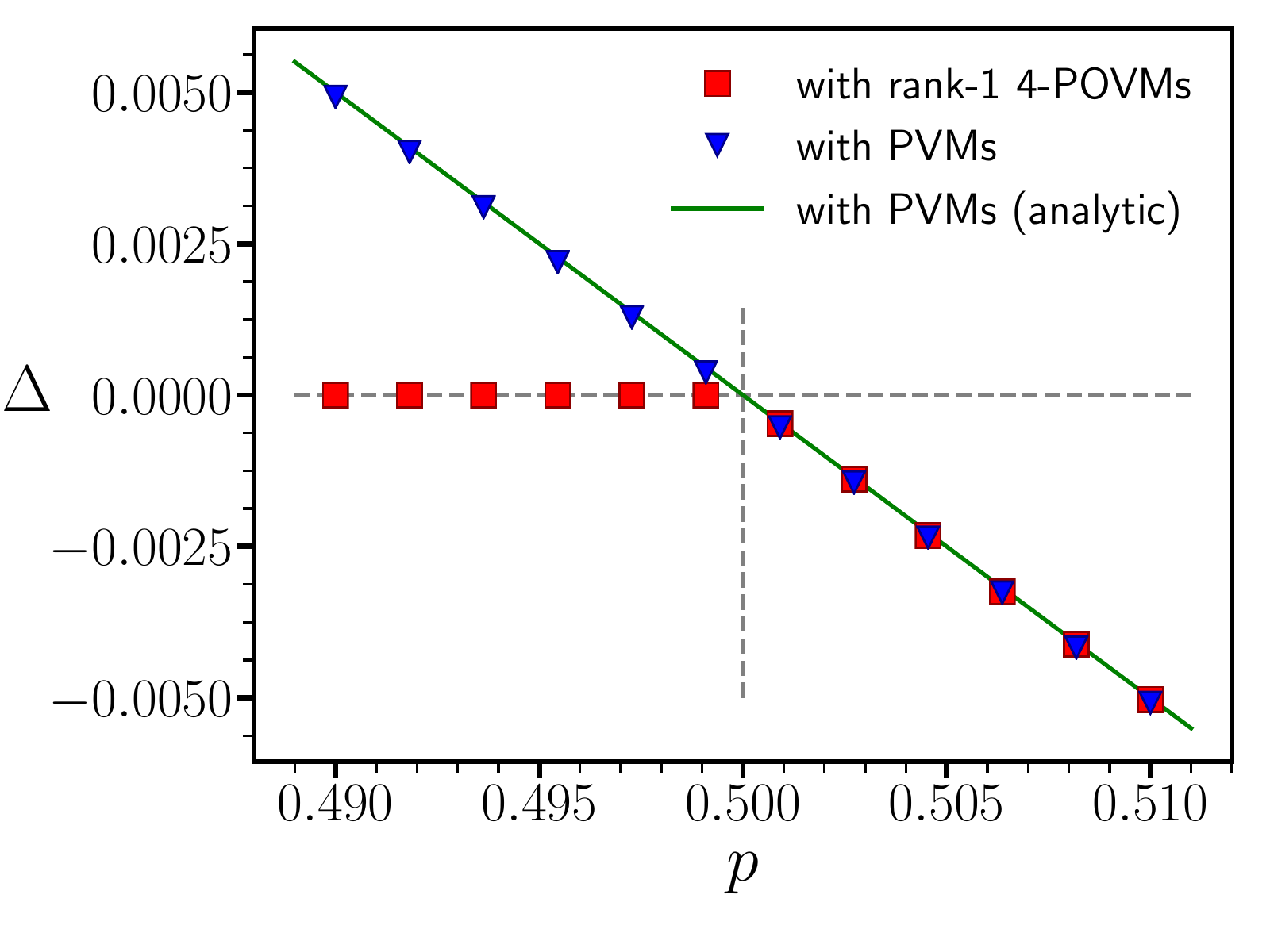}\\
	\end{center}
	\caption{(Colour online) The gap function for the two-qubit Werner state with mixing probability around $\frac{1}{2}$ for rank-1 4-POVMs and for PVMs.}
	\label{fig:werner}
\end{figure}

It is easy to see that the Werner state has $\operatorname{U}(2)$ symmetry as defined above. Moreover, since the action is transitive on Bob's Bloch sphere, by Theorem~\ref{th:second}, the candidate for the LHS ensemble can be limited to the uniform distribution, $u=1$. 
To simplify the notation, from now on we simply use $\Delta$ to denote the gap function without specifying the set of measurements and the capacity, which can be understood from the context. The computation of the gap function $\Delta$ generally requires global minimisation over $Z$, which is carried out by the standard (non-deterministic) simulated annealing algorithm (see Appendix~\ref{sec:simulated_annealing} for details).

In Figure~\ref{fig:werner}, we present values of $\Delta$ found for the mixing probability $p \approx \frac{1}{2}$. For comparison, we also present the numerical results of $\Delta$ when the measurements are limited to PVMs, which are in very good agreement with the analytical calculation (see Appendix~\ref{sec:PVM_gap}). For $p > \frac{1}{2}$, one observes that the gap function $\Delta$ for rank-$1$ $4$-POVMs is negative and coincides with the gap function for PVMs. For all $p \le \frac{1}{2}-10^{-3}$, the best obtained values for $\Delta$ are less than $10^{-10}$ but persistently non-negative. The fact that the gap function tends to vanish for $p \le \frac{1}{2}$ instead of attending finite positive values seems to be because of the high dimensionality of POVMs~\footnote{The effect of dimensionality can be understood directly from Figure~\ref{fig:nesting_duality}. On the plane, the gap between the boundaries of the two objects is positive. However when imbedded in the 3D space, the gap will be zero in direction that is perpendicular to plane that contains the two objects.}
The presented results therefore support conjecture~\ref{conj:werner_steering}. Unfortunately, within this work, the ambiguity region of $\frac{1}{2} - 10^{-3} \le p \le \frac{1}{2}$ cannot be resolved due to limits of numerical accuracy (see Appendix~\ref{sec:simulated_annealing}). 
 
{\it Conclusion.}
Our work simplifies the definition of quantum steering, where we show that the local hidden variable indexing LHS ensemble can be omitted. As a direct consequence, a stronger theorem on the symmetry of the LHS ensemble is derived. We have thereby opened a general approach to studying quantum steerability with POVMs. Further works to strengthen the numerical evidence for the unsteerability of the Werner state at $p=\frac{1}{2}$ and testing steerability with POVMs of arbitrary two-qubit states are underway. Moreover, although the current illustrative applications are based on two qubits, our approach is not limited by the dimensionality of the systems. It is hoped that systematic tests for steering with POVMs (particularly for high dimensional systems) will give a complete answer to the fundamental question of the equivalence between POVMs and PVMs for steerability. Beyond quantum steering, we leave open the question of whether this approach can be extended to characterise Bell nonlocality with POVMs.  
\begin{acknowledgements} 
We thank Jessica Bavaresco, Johannes Berg, Anna C. S. Costa, Otfried G\"uhne, Marcus Huber, Michael Jarret, Brad Lackey, X. Thanh Le, N. Duc Le, H. Viet Nguyen, Matthew Nicol, Ana Bel\'{e}n Sainz, Roope Uola, Reinhard Werner, and Howard Wiseman for useful discussions on general and technical aspects of this work. SJ is supported by an Imperial College London Junior Research Fellowship. CN acknowledges the support from Erwin Schr\"odinger International Institute for Mathematics and Physics (ESI) during his participation in the program Quantum Physics and Gravity.
\end{acknowledgements}
\appendix
\section{The compactness of the capacity}
\label{sec:compact_K}
In this appendix, we prove the compactness of the capacity $\K^n(u)$. On the Bloch sphere $\S_B$, the distribution $u$ defines a measure $\mu$. We assume that the response functions $G_i(P)$ are squared-integrable with respect to $\mu$. Consider the space $\L^2(\S_B,\mu)$ of squared-integrable functions  on $\S_B$ with respect to $\mu$, which is a Hilbert space (as usual, we ignore the difference on a zero-measured set)~\cite[Chapter V]{Hu1967a}. We then construct the Hilbert space $[\L^2(\S_B,\mu)]^{\oplus n}$ as usual.  We define the subset of $\Omega= \{G= \oplus_{i=1}^{n} G_i  \in [\L^2(\S_B,\mu)]^{\oplus n} : 0 \le G_i(P) \le 1, \sum_{i=1}^{n} G_i(P)= 1 \}$, which is closed and convex, thus weakly closed~\cite[Chapter V, Corollary 1.5]{Conway1990a}. Moreover, it is obviously bounded, thus weakly compact~\cite[Chapter V, Theorem 4.2]{Conway1990a}. 

Now consider the linear operator $T:[\L^2(\S_B,\mu)]^{\oplus n} \to (\B^H)^{\oplus n}$, defined by $T(G)= \oplus_{i=1}^{n} \int \d \mu (P) G_i (P) P$. It is obvious that $T$ is bounded, thus continuous and weakly continuous~\cite[Chapter VI, Theorem 1.1]{Conway1990a}.  It then follows directly that $\K^n(u)=T(\Omega)$ is compact.
\section{Nesting criterion by duality}
\label{sec:nesting_by_duality}
In this Appendix, we provide the proof for the nesting criterion by duality.
\setcounter{lemma}{1}
\begin{lemma}[Nesting criterion by duality]
Let $\X$ be a convex compact subset of a finite-dimensional Euclidean space. Then a compact subset $\Y$ is contained in $\X$ if and only if $ \max_{X \in \X} \dprod{Z}{X} \ge \max_{Y \in \Y} \dprod{Z}{Y}$ for all vectors $Z$ in the space. 
\end{lemma}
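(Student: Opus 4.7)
\textit{Proof plan.} I would prove this by splitting into the two directions and exploiting the hyperplane separation theorem for the harder one.

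The "only if" direction is immediate: if $\Y \subseteq \X$, then for any $Z$ the set $\{\dprod{Z}{Y}: Y \in \Y\}$ is a subset of $\{\dprod{Z}{X}: X \in \X\}$, and passing to the supremum (attained by compactness on both sides) yields $\max_{Y \in \Y} \dprod{Z}{Y} \le \max_{X \in \X} \dprod{Z}{X}$.

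For the "if" direction, I would argue by contrapositive: assume $\Y \not\subseteq \X$ and produce a vector $Z$ that violates the inequality. Pick $Y_0 \in \Y \setminus \X$. Since $\X$ is convex and compact and $\{Y_0\}$ is a compact convex set disjoint from $\X$, the strict hyperplane separation theorem in a finite-dimensional Euclidean space gives a vector $Z$ and a scalar $c$ such that $\dprod{Z}{X} < c < \dprod{Z}{Y_0}$ for every $X \in \X$. Taking the maximum over $\X$ (attained by compactness) preserves the strict inequality: $\max_{X \in \X} \dprod{Z}{X} \le c < \dprod{Z}{Y_0} \le \max_{Y \in \Y} \dprod{Z}{Y}$, contradicting the hypothesis. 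Hence $\Y \subseteq \X$.

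The only subtle point is ensuring strict separation, which is why the convexity and compactness of $\X$ are essential; without convexity the statement can fail, and without compactness one would only obtain a non-strict separation that does not produce the required contradiction. The compactness of $\Y$ itself is used only to guarantee that the max on the right-hand side is attained, so that the statement of the lemma is meaningful. No further tools are needed beyond the separation theorem, so the proof should fit in a few lines once the contrapositive is set up.
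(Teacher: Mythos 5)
Your proof is correct and follows essentially the same route as the paper's: the easy direction by inclusion of the sets of inner products, and the converse by separating a point $A \in \Y \setminus \X$ from the convex compact set $\X$ with a hyperplane, yielding a vector $Z$ that violates the inequality. The only cosmetic difference is that you phrase it as a contrapositive with explicit strict separation constant $c$, while the paper argues by contradiction directly; the substance is identical.
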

\begin{proof}
It is obvious that if $\Y \subseteq \X$ then $\max_{X \in \X} \dprod{Z}{X} \ge \max_{Y \in \Y} \dprod{Z}{Y}$ for all $Z$. Now suppose  $ \max_{X \in \X} \dprod{Z}{X} \ge \max_{Y \in \Y} \dprod{Z}{Y}$ for all $Z$ and $\Y \not\subseteq \X$. Because  $\Y \not\subseteq \X$, there exists $A \in \Y$, $A \not\in \X$. Since $\X$ is a convex and compact set, by the separation theorem, $A$ is separated from $\X$ by a hyperplane, i.e., there exists a vector $Z$ such that $\dprod{Z}{A} > \max_{X \in \X} \dprod{Z}{X}$~\cite{rockafellar2015convex}. It follows that $ \max_{Y \in \Y} \dprod{Z}{Y}  \ge \dprod{Z}{A} > \max_{X \in \X} \dprod{Z}{X}$, contradicting the assumption. 
\end{proof}
\section{Solving the first optimisation problem}
\label{sec:central_maximisation}
Here we provide details of the solution to the constrained maximisation problem~\eqref{eq:first_maximisation_solution}. Using the definition of $\K^n(u)$, we have
\begin{equation}
\max_{K \in \K^n(u)} \dprod{Z}{K} = \max_{G} \int \d \omega (P) u (P) \sum_{i=1}^{n} G_i(P) \dprod{Z_i}{P},
\label{eq:first_maximisation}
\end{equation}
subject to the constraints $G_i(P) \ge 0$ and $\sum_{i=1}^{n} G_i(P)=1$. This is a linear maximisation problem with linear constraints, which can be solved easily by Lagrange's multipliers.
For every constraint $\sum_{i=1}^{n} G_i(P) = 1$ for each $P \in \S_A$, we introduce a Lagrange's multiplier $\lambda (P)$.
This leads us to a modified unconstrained maximisation problem
\begin{align}
I[\lambda (P)] &=   \max_{G} \left\{ \int \d \omega (P) u (P) \sum_{i=1}^{n} G_i(P) \dprod{Z_i}{P} \right. \nonumber \\
& \qquad \left. - \int \d \omega (P) \lambda (P) \left[\sum_{i=1}^{n} G_i(P) - 1 \right] \right\} \nonumber \\
 &=   \max_{G} \left\{ \int \d \omega (P)  \sum_{i=1}^{n} G_i(P) \left[ u(P) \dprod{Z_i}{P} -\lambda(P)\right]  \right\} \nonumber \\
& \qquad + \int \d \omega (P) \lambda (P).  
\end{align} 
Note that the function under maximisation is a linear function of $G_i(P)$, which is bounded by $0\le G_i(P)\le 1$. Therefore $I[\lambda (P)]$ is saturated by
\begin{equation}
G_i^\ast (P) = \Theta [u(P) \dprod{Z_i}{P} - \lambda (P)],
\label{eq:solution_lambda}
\end{equation}
where $\Theta$ is the Heaviside step function, $\Theta (x) = 1$ if $x \ge 0$, and $\Theta (x) = 0$ otherwise. One now needs to choose $\lambda (P)$ such that the constraint is satisfied,
\begin{equation}
\sum_{i=1}^{n} \Theta [u(P) \dprod{Z_i}{P} - \lambda (P)] = 1.
\end{equation}
Consider some fixed $P$. The last equation means that $\lambda (P)$ must be such that out of $\{u(P) \dprod{Z_i}{P}\}_{i=1}^{n}$, only one is larger than or equal to $\lambda (P)$. In other words, the suitable choice for $\lambda (P)$ is
\begin{equation}
\lambda (P) = u(P) \max_i \dprod{Z_i}{P}.
\end{equation}
With this solution for $\lambda (P)$, substituting~\eqref{eq:solution_lambda} to~\eqref{eq:first_maximisation} one then obtains~\eqref{eq:first_maximisation_solution}. 

So far we actually ignored the case where for some $P$, $\max_i \dprod{Z_i}{P}$ is attained by two indices, say, $i=i_1$ and $i=i_2$. In this case, one then has to slightly modify~\eqref{eq:solution_lambda}: at such a point $P$, while $G_i^{\ast} (P) = 0$ for $i \ne i_1, i_2$,  $G_{i_1}^{\ast}(P)$ and $G_{i_2}^{\ast} (P)$ can take arbitrary values between $0$ and $1$, provided that $G_{i_1}^{\ast} (P) + G_{i_2}^{\ast} (P) = 1$. Similar modification is needed if $\max_i \dprod{Z_i}{P}$ is attained by more indices. The maximal value~\eqref{eq:first_maximisation_solution} however remains the same.
\section{Corollaries of Theorem~\ref{th:main}}
\label{sec:corollaries_main}
\subsection{The equivalence between our definition and the original definition of steering}
\label{sec:equivalent_definitions}
The original definition of quantum steering~\cite{wiseman2007steering} goes as follows. Let $(\Lambda,\nu)$ be a probability measure space (we ignore the symbol which denotes the $\sigma$-algebra for the measure $\nu$). Let $F:\Lambda \to \S_B$ be a measurable function from the index space $(\Lambda,\nu)$ to the set of Bob's pure states $\mathcal{S}_B$. A state $\rho$ is then called  $(\Lambda,\nu)$-\emph{unsteerable} (from Alice's side) with respect to $n$-POVMs if, for any $n$-POVM $E$, Alice can find $n$ response functions $G_i$ (with $G_i(\lambda) \ge 0$ and $\sum_{i=1}^{n} G_i(\lambda)= 1$ for all $\lambda \in \Lambda$) such that the steering ensemble can be simulated via a \emph{local hidden state model}~\cite{wiseman2007steering},
\begin{equation}
\Tr_{A}[\rho (E_i \otimes \II_B)]= \int_{\Lambda} \d \nu (\lambda) \, \,  G_i(\lambda) F(\lambda),
\label{eq:indexed_unsteerable_def}
\end{equation}
where the integral is taken over the index space $\Lambda$. In this case, we say $\rho$ admits an \emph{indexed LHS model}.

If $\rho$ satisfies the definition of steering in the main text, it is cleared that it admits an indexed LHS model, where the index space is Bob's pure states themselves. On the other hand, if $\rho$ admits an indexed LHS model, it is \emph{unclear} that the existence of a response function on Bob's pure states is guaranteed. This is particularly important if the indexing function $F$ is a many-to-one function. One of the strengths of our approach is that it provides a proof that a response function on Bob's pure states does exist. We know of no other (constructive) proof at the moment. 
\begin{corollary}
Our definition of quantum steering is equivalent to the conventional definition of quantum steering where the LHS ensemble is indexed by a hidden variable.
\label{cor:1}
\end{corollary}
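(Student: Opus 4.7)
The forward direction is immediate: given a $u$-unsteerable state, take $(\Lambda,\nu) := (\S_B, u \cdot \omega)$ with $F = \operatorname{id}$, and any admissible response function on $\S_B$ pulls back directly to an admissible response function on $\Lambda$. So the substance of the corollary is the converse, which I would prove by mirroring the derivation of Theorem~\ref{th:main} in the indexed setting and then transporting the resulting inequality to $\S_B$ via a push-forward.

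First, I would define the \emph{indexed capacity} $\K_\Lambda^n(\nu,F)$ as the set of composite operators whose components are $K_i = \int_\Lambda d\nu(\lambda)\, G_i(\lambda)\, F(\lambda)$ for admissible response functions $G_i$ on $\Lambda$. The compactness argument of Appendix~\ref{sec:compact_K} and the Lagrange-multiplier calculation of Appendix~\ref{sec:central_maximisation} transfer verbatim (with $u(P)\,d\omega(P)$ replaced by $d\nu(\lambda)$ and $P$ by $F(\lambda)$), so, combined with the nesting criterion of Lemma~\ref{lem:first_nesting}, they yield the indexed analogue of~\eqref{eq:ineq_main}:
\begin{equation*}
\int_\Lambda d\nu(\lambda)\, \max_i \dprod{Z_i}{F(\lambda)} \ \ge\ \max_{E \in \M^n} \sum_{i=1}^{n} \dprod{Z_i}{E_i'}
\end{equation*}
for all $Z \in (\B^H)^{\oplus n}$, whenever $\rho$ is $(\Lambda,\nu)$-unsteerable.

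Second, I would invoke the change-of-variables formula for the push-forward measure $\mu := F_*\nu$ on $\S_B$~\cite{Bogachev2007a}. Since $P \mapsto \max_i \dprod{Z_i}{P}$ is continuous and hence Borel-measurable on the compact set $\S_B$, the left-hand side of the indexed inequality rewrites as $\int_{\S_B} d\mu(P) \max_i \dprod{Z_i}{P}$. The inequality becomes precisely the condition of Theorem~\ref{th:main} evaluated against the Borel probability measure $\mu$ on $\S_B$, and the reverse implication of Theorem~\ref{th:main} then delivers $\mu$-unsteerability of $\rho$ in the sense of~\eqref{eq:unsteerable_def}.

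The main obstacle is the (otherwise cosmetic) observation that the formalism of Theorem~\ref{th:main} must accommodate arbitrary Borel probability measures on $\S_B$ rather than only those admitting a density against $\omega$, since $F_*\nu$ may well be singular (for example, concentrated on finitely many pure states when $\Lambda$ is discrete). This is harmless: the capacity $\K^n(u)$ depends on $(u,\omega)$ only through the combined measure $u\cdot\omega$, and every step in the proof of Theorem~\ref{th:main}---compactness of $\K^n$, Lagrange duality for the first maximisation, and nesting by duality---is measure-theoretic, and so goes through unchanged with the symbol $u(P)\,d\omega(P)$ reinterpreted throughout as a generic Borel probability measure $d\mu(P)$ on $\S_B$.
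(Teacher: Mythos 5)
Your proposal is correct and follows essentially the same route as the paper: derive the indexed analogue of inequality~\eqref{eq:ineq_main} by repeating the argument behind Theorem~\ref{th:main}, rewrite the integral over $\Lambda$ via the push-forward measure $F_*\nu$ on $\S_B$, and then apply Theorem~\ref{th:main} in reverse to conclude unsteerability with an LHS ensemble on Bob's pure states. If anything, you are more careful than the paper on the point where the push-forward measure may be singular with respect to the Haar measure $\omega$ (the paper simply writes ``let $u$ be the distribution generated by $\mu$ with respect to $\omega$''), and your observation that the whole formalism goes through with a generic Borel probability measure in place of $u(P)\,\d\omega(P)$ correctly closes that gap.
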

\begin{proof}
Suppose $\rho$ admits an indexed LHS ensemble, which is indexed by $(\Lambda,\nu)$. Then following the argument that leads to Theorem 1 in the main text, we arrive at the following statement: 
\begin{lemma*} 
A bipartite state $\rho$ is $(\Lambda,\nu)$-unsteerable if and only if
\begin{equation}
\int_{\Lambda} \d \nu (\lambda) \max_i \dprod{Z_i}{F(\lambda)} \ge \max_{E \in \M^n} \sum_{i=1}^{n} \dprod{Z_i}{E_i'}
\label{eq: indexed_ineq_main}
\end{equation}
for all composite operators $Z= \oplus_{i}^{n} Z_i$ in $(\B^H)^{\oplus n}$.
\end{lemma*}
Now denote by $\mu$ the push-forward measure generated by $F$ on the set of Bob's pure states~\cite{Bogachev2007a}. Changing the variable in the integral, one has 
\begin{equation}
\int_{\Lambda} \d \nu (\lambda) \max_i \dprod{Z_i}{F(\lambda)} = \int \d \mu (P) \max_i \dprod{Z_i}{P},
\end{equation} 
where the latter integral is taken over Bob's pure states.
Let $u$ be the distribution on Bob's pure states generated by $\mu$ with respect the Haar measure $\omega$. This means the inequality~\eqref{eq:ineq_main} in the main text is satisfied for distribution $u$. According to Theorem~\ref{th:main}, $\rho$ is $u$-unsteerable. 
\end{proof}

\subsection{The symmetry of LHS ensembles}
For a given state $\rho$, we denote by $\Omega^n(\rho)$ the set of ensembles $u$ over Bob's pure states such that $\rho$ is $u$-unsteerable with $n$-POVMs ($\Omega^n(\rho)$ is empty if $\rho$ is steerable).  From inequality~\eqref{eq:ineq_main}, it is easy to see that $\Omega^n(\rho)$ is convex (Corollary~\ref{cor:2}). Moreover, we show that the symmetry of $\rho$ implies the symmetry of $\Omega^n(\rho)$. 
\begin{corollary}
For a given state $\rho$, $\Omega^{n}(\rho)$ is convex.
\label{cor:2}
\end{corollary}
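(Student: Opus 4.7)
My plan is to derive convexity of $\Omega^n(\rho)$ directly from the characterisation provided by Theorem~\ref{th:main}. Fix two ensembles $u_1, u_2 \in \Omega^n(\rho)$ and a parameter $t \in [0,1]$, and set $u = t u_1 + (1-t) u_2$. First I would note that $u$ is automatically a non-negative distribution on $\S_B$ and, because the minimal requirement~\eqref{eq:LHS_normalisation} is linear in the ensemble, $u$ satisfies $\int \d \omega(P) u(P) P = \II_A'$ as well. So $u$ is a legitimate candidate LHS ensemble.

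Next, I would apply the ``if'' direction of Theorem~\ref{th:main}: $u_1, u_2 \in \Omega^n(\rho)$ means that for every composite operator $Z = \oplus_{i=1}^n Z_i$ in $(\B^H)^{\oplus n}$,
\begin{equation*}
\int \d \omega(P) u_k(P) \max_i \dprod{Z_i}{P} \; \geq \; \max_{E \in \M^n} \sum_{i=1}^n \dprod{Z_i}{E_i'}, \qquad k=1,2.
\end{equation*}
The crucial observation is that the left-hand side is linear in the ensemble argument, while the right-hand side is independent of $u$. Taking the convex combination of the two inequalities with weights $t$ and $1-t$ therefore yields
\begin{equation*}
\int \d \omega(P) \bigl[t u_1(P) + (1-t) u_2(P)\bigr] \max_i \dprod{Z_i}{P} \; \geq \; \max_{E \in \M^n} \sum_{i=1}^n \dprod{Z_i}{E_i'},
\end{equation*}
i.e.\ the same inequality holds for $u$ and for every $Z$. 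Invoking the ``only if'' direction of Theorem~\ref{th:main}, $\rho$ is $u$-unsteerable, so $u \in \Omega^n(\rho)$.

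I do not anticipate a genuine obstacle: the argument is essentially a one-line consequence of the fact that Theorem~\ref{th:main} characterises membership in $\Omega^n(\rho)$ by a family of inequalities that are jointly linear in $u$. The only points worth checking carefully are (i) that the convex combination of two distributions is again a distribution and (ii) that the minimal normalisation condition~\eqref{eq:LHS_normalisation} is preserved, both of which are immediate. As an aside, one could alternatively give a constructive proof by exhibiting a response function for $u$ from those of $u_1, u_2$, e.g. via the weighted average $G_i(P) = [t u_1(P) G_i^{(1)}(P) + (1-t) u_2(P) G_i^{(2)}(P)] / [t u_1(P) + (1-t) u_2(P)]$; but the duality route above avoids any measurability concerns and fits naturally into the framework just developed.
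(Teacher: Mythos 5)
Your proof is correct and follows essentially the same route as the paper, which also deduces convexity from the linearity in $u$ of the left-hand side of inequality~\eqref{eq:ineq_main} (the paper simply states this step as ``easy to check,'' which you have spelled out).
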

\begin{proof}
Suppose $u_1$ and $u_2$ are in $\Omega^{n}(\rho)$. That is to say, $u_1$ and $u_2$ satisfy inequality~\eqref{eq:ineq_main}. It is then easy to check that inequality~\eqref{eq:ineq_main} is also satisfied for all convex combinations of $u_1$ and $u_2$. In other words, all convex combinations of $u_1$ and $u_2$ are in $\Omega^{n}(\rho)$.
\end{proof}
Let us recall from the main text that the state $\rho$ is said to have $(\G,U,V)$-symmetry with $\G$ being a group and $U$ and $V$ being its two representations on $\H_A$ and $\H_B$, respectively, if $\rho= U^{\dagger} (g) \otimes V^{\dagger}(g) \rho U (g) \otimes V(g)$ for all $g \in \G$. The action $V$ of $\G$ on Bob's pure states $\S_B$ generates an action $R_V$ on the space of distributions on $\S_B$ defined by $[R_V(g) u] (P) = u[V^\dagger(g) P V(g)]$. 
\begin{corollary}
For a given state $\rho$ which is $(\G,U,V)$-symmetric, then $\Omega^n(\rho)$ is $(\G,V)$-symmetric, i.e., $\Omega^n(\rho)= R_V (g)\Omega^n(\rho)$ for all $g \in \G$.
\label{cor:3}
\end{corollary}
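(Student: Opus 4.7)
The plan is to use Theorem~\ref{th:main} as the working characterization of $\Omega^n(\rho)$: a distribution $u$ lies in $\Omega^n(\rho)$ if and only if the inequality~\eqref{eq:ineq_main} holds for every composite operator $Z \in (\B^H)^{\oplus n}$. Fixing $g \in \G$ and $u \in \Omega^n(\rho)$, I will verify this inequality for $R_V(g)u$ at an arbitrary $Z$; the opposite inclusion then follows by applying the same argument to $g^{-1}$ and using that $R_V$ is a representation.

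The computation splits cleanly into two independent transformations. First, I handle the LHS of~\eqref{eq:ineq_main} for $R_V(g)u$. Substituting the definition $[R_V(g)u](P)=u(V^\dagger(g)PV(g))$ and changing variables to $Q=V^\dagger(g)PV(g)$, which is a measure-preserving bijection of $\S_B$ because the Haar measure $\omega$ is invariant under unitary action, together with the cyclicity of the Hilbert--Schmidt inner product, which yields $\dprod{Z_i}{V(g)QV^\dagger(g)}=\dprod{V^\dagger(g)Z_iV(g)}{Q}$, recasts the LHS as the LHS of~\eqref{eq:ineq_main} evaluated at $u$ and at the transformed composite operator $\oplus_i V^\dagger(g)Z_iV(g)$.

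Second, I show that the RHS is invariant under the very same replacement $Z_i\mapsto V^\dagger(g)Z_iV(g)$. The key lemma is the transformation rule
\begin{equation*}
\rho^{A\to B}\bigl(U(g)XU^\dagger(g)\bigr)=V(g)\,\rho^{A\to B}(X)\,V^\dagger(g),
\end{equation*}
which I would derive directly from the hypothesis $\rho=(U^\dagger(g)\otimes V^\dagger(g))\rho(U(g)\otimes V(g))$ by inserting this identity into $\Tr_A[\rho(X\otimes \II_B)]$, using that operators on Bob's side commute with $\Tr_A$ and can be pulled out, and using cyclicity of $\Tr_A$ in the $A$-factor to absorb the $U$'s. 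Since $E\mapsto U^\dagger(g)EU(g)$ is a bijection of $\M^n$ onto itself, reparameterising the maximum by this bijection and applying the transformation rule together with the same cyclicity of the inner product yields $\max_{E\in\M^n}\sum_i\dprod{Z_i}{E_i'}=\max_{E\in\M^n}\sum_i\dprod{V^\dagger(g)Z_iV(g)}{E_i'}$.

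Combining the two parts, the instance of inequality~\eqref{eq:ineq_main} that $R_V(g)u$ must satisfy at $Z$ coincides term by term with the instance that $u$ satisfies at $V^\dagger(g)ZV(g)$; since $u\in\Omega^n(\rho)$ and $Z$ is arbitrary, this establishes $R_V(g)u\in\Omega^n(\rho)$. I expect the main obstacle to be purely notational: carefully tracking which side each $V(g)$ lands on after partial tracing, and then checking that the $V$-conjugation introduced by the LHS change of variable exactly matches the one forced on the RHS by the symmetry of $\rho$. Once the transformation rule for $\rho^{A\to B}$ is pinned down, the remainder is bookkeeping with cyclicity of trace and invariance of the Haar measure.
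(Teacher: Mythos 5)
Your proposal is correct and follows essentially the same route as the paper: characterize $\Omega^n(\rho)$ via the inequality of Theorem~\ref{th:main}, use the symmetry of $\rho$ together with the $U(g)$-invariance of $\M^n$ to show the right-hand side is unchanged under $Z_i \mapsto V^\dagger(g) Z_i V(g)$, and use Haar-measure invariance to convert the left-hand side for $R_V(g)u$ into the left-hand side for $u$ at the conjugated $Z$. Your covariance rule $\rho^{A\to B}(U(g)XU^\dagger(g)) = V(g)\rho^{A\to B}(X)V^\dagger(g)$ is just a repackaging of the paper's direct manipulation of $\Tr[\rho(E_i\otimes Z_i)]$, so the two arguments coincide in substance.
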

\begin{proof}
We need to show that if $\rho$ is $u$-unsteerable for some $u$, that is, if inequality~\eqref{eq:ineq_main} holds for $u$, then it also holds for $R_V(g)u$. Due to the symmetry of $\rho$, we have
\begin{widetext}
\begin{align}
\max_{E \in \M^n} \sum_{i=1}^{n} \Tr [ \rho (E_i \otimes Z_i)] &= \max_{E \in \M^n} \sum_{i=1}^{n} \Tr [U^{\dagger}(g) \otimes V^\dagger(g) \rho U(g) \otimes V(g) (E_i \otimes Z_i)] \nonumber \\
&= \max_{E \in \M^n} \sum_{i=1}^{n} \Tr [\rho  \ U(g) E_i U^{\dagger}(g) \otimes V(g) Z_i V^{\dagger}(g)] \nonumber \\
&= \max_{E \in \M^n} \sum_{i=1}^{n} \Tr [\rho  \ E_i \otimes V(g) Z_i V^{\dagger}(g)],
\end{align}
\end{widetext}
where the last equality is because $\M^n$ is symmetric under the action $U$ of $\G$. Then inequality~\eqref{eq:ineq_main} is equivalent to
\begin{equation}
\int \d \omega (P) u(P) \max_i \dprod{Z_i}{P} \ge \max_{E \in \M^n} \sum_{i=1}^{n} \dprod{V(g) Z_i V^{\dagger}(g)}{E_i'}.
\end{equation}
That this inequality holds for all $Z$ is then equivalent to
\begin{equation}
\int \d \omega (P) u(P) \max_i \dprod{V^{\dagger}(g) Z_i V(g)}{P} \ge \max_{E \in \M^n} \sum_{i=1}^{n} \dprod{Z_i}{E_i'}
\end{equation}
for all $Z$. Now we manipulate the left-hand side, 
\begin{widetext}
\begin{align}
\int \d \omega (P) u(P) \max_i \dprod{V^{\dagger}(g) Z_i V(g)}{P} &= \int \d \omega (P) u(P) \max_i \dprod{Z_i}{V(g)PV^{\dagger}(g)} \nonumber \\
&= \int \d \omega (P) u[V^{\dagger}(g)PV(g)] \max_i \dprod{Z_i}{P},
\end{align}
\end{widetext}
where the last inequality is a change of integration variable. By definition, $R_V(g)u(P)= u(V^{\dagger}(g)PV(g))$. Therefore, inequality~\eqref{eq:ineq_main} indeed holds for $R_V(g)u$.
\end{proof}

\setcounter{theorem}{1}
\begin{theorem}[Symmetry of LHS ensemble]
For a given state $\rho$ which is $(\G,U,V)$-symmetric with a compact group $\G$, if $\rho$ is unsteerable with $n$-POVMs then it admits an LHS ensemble $u^{\ast}$ which is $(\G,R_V)$-invariant, i.e., $u^{\ast}=R_V(g) u^{\ast}$ for all $g$ in $\G$.
\end{theorem}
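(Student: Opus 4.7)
The plan is to construct $u^{\ast}$ by averaging an arbitrary LHS ensemble over the group $\G$ with respect to the Haar measure, and then use Corollaries~\ref{cor:2} and~\ref{cor:3} together with the invariance of the Haar measure to conclude. Because $\rho$ is unsteerable with $n$-POVMs, the set $\Omega^n(\rho)$ is nonempty, so we may fix some $u \in \Omega^n(\rho)$. Since $\G$ is compact, it carries a unique normalised (bi-invariant) Haar measure $\d g$, and I would define
\begin{equation}
u^{\ast}(P) \;=\; \int_{\G} \d g \, [R_V(g) u](P) \;=\; \int_{\G} \d g \, u\bigl[V^{\dagger}(g) P V(g)\bigr].
\end{equation}

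First I would verify that $u^{\ast}$ is a legitimate ensemble (nonnegative, normalised with respect to the Haar measure $\omega$ on $\S_B$). Nonnegativity is immediate, and normalisation follows from Fubini's theorem together with the fact that each $R_V(g)u$ is itself normalised (the change-of-variable $P \mapsto V(g) P V^{\dagger}(g)$ preserves $\omega$ since $\omega$ is the $\SU$-invariant measure on $\S_B$). Next I would show invariance: for any $h \in \G$,
\begin{align}
[R_V(h) u^{\ast}](P) &= \int_{\G} \d g \, u\bigl[V^{\dagger}(g) V^{\dagger}(h) P V(h) V(g)\bigr] \nonumber \\
&= \int_{\G} \d g \, u\bigl[V^{\dagger}(hg) P V(hg)\bigr] \;=\; u^{\ast}(P),
\end{align}
where the last equality uses left-invariance of the Haar measure under $g \mapsto hg$.

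The remaining task is to show $u^{\ast} \in \Omega^n(\rho)$, i.e., that $\rho$ is $u^{\ast}$-unsteerable. By Corollary~\ref{cor:3}, $R_V(g) u \in \Omega^n(\rho)$ for every $g \in \G$, so each $R_V(g)u$ satisfies inequality~\eqref{eq:ineq_main}. Because the left-hand side of~\eqref{eq:ineq_main} is linear in the ensemble while the right-hand side does not depend on it, I would integrate~\eqref{eq:ineq_main} for $R_V(g)u$ over $\G$ against $\d g$ and exchange the order of integration via Fubini (justified by nonnegativity of the integrand on a compact space) to obtain
\begin{equation}
\int \d \omega(P) \, u^{\ast}(P) \max_i \dprod{Z_i}{P} \;\geq\; \max_{E \in \M^n} \sum_{i=1}^{n} \dprod{Z_i}{E_i'}
\end{equation}
for every $Z \in (\B^H)^{\oplus n}$. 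By Theorem~\ref{th:main}, this is precisely the condition for $\rho$ to be $u^{\ast}$-unsteerable, completing the proof. Alternatively one could appeal directly to Corollary~\ref{cor:2}, viewing $u^{\ast}$ as a continuous convex combination of elements of $\Omega^n(\rho)$, provided one first strengthens Corollary~\ref{cor:2} from finite to integral convex combinations (a strengthening which, as above, follows at once from the linearity of~\eqref{eq:ineq_main} in $u$).

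The main technical obstacle is not conceptual but measure-theoretic: one must ensure that the Haar-averaged integral defines an honest distribution in the same class over which the theorem is quantified, and that Fubini applies when interchanging the $\G$-integral with the $\omega$-integral and with the supremum hidden in $\max_i \dprod{Z_i}{P}$. Compactness of $\G$, continuity of the $\G$-action on $\S_B$, and nonnegativity of all integrands make each of these steps routine, so the proof reduces to the clean averaging argument sketched above.
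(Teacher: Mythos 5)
Your proposal is correct and follows essentially the same route as the paper's first proof: fix $u \in \Omega^n(\rho)$, average $R_V(g)u$ over the Haar measure of the compact group $\G$, use Corollary~\ref{cor:3} to keep each $R_V(g)u$ in $\Omega^n(\rho)$, and conclude by convexity/linearity of inequality~\eqref{eq:ineq_main} that $u^{\ast} \in \Omega^n(\rho)$ while invariance of $u^{\ast}$ follows from invariance of the Haar measure. Your observation that the convexity statement (Corollary~\ref{cor:2}) must be extended from finite to integral convex combinations --- which you justify directly by integrating~\eqref{eq:ineq_main} over $\G$ and invoking Fubini --- is a slightly more careful rendering of the step the paper passes over implicitly.
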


\begin{proof}[Proof 1]
Since $\rho$ is unsteerable, there exists $u \in \Omega^n(\rho)$. By Corollary~\ref{cor:3}, $R_V(g) u \in \Omega^n(\rho)$ for all $g \in \G$. Since $\Omega^n(\rho)$ is convex (Corollary 2), the average over the Haar measure of $\G$, i.e., $u^{\ast}=\int_\G \d \mu(g) R_V(g) u \in \Omega^n(\rho)$, also belongs to $\Omega^n(\rho)$. This averaged distribution $u^{\ast}$ is obviously invariant under the action of $\G$, i.e., $u^{\ast}=R_V(g) u^{\ast}$ for all $g$ in $\G$.
\end{proof}
For completeness, we also provide an alternative proof of this theorem without the use of inequality~(\ref{eq:ineq_main}). This proof is more similar to the original proof in Ref.~\cite{wiseman2007steering}; to get the stronger statement, one has to apply the so-called mean value theorem for integrals~\cite{Bogachev2007a}, though.
\begin{proof}[Proof 2]
Suppose $\rho$ is $u$-unsteerable with $n$-POVMs, then for a POVM $E$, there exists response function $G$ such that
\begin{equation}
E'_i= \int \d \omega (P) u (P) G_i^E (P) P.
\end{equation}
Here, to track the dependence of the response function $G$ on the measurement, we introduce the superscript $E$ for $G$.
Now due to the symmetry of the state, we also have
\begin{equation}
E'_i= \int \d \omega (P) R_V (g) u (P) G_i^{U^\dagger(g) E U(g)} [V^{\dagger} (g) P V(g)] P.
\end{equation}
Since the right-hand-side is independent of $g$, we can take the average over $g$ with respect to the Haar measure $\mu$ of $\G$,
\begin{equation}
E'_i= \int \d \omega (P) P \int d\mu (g) R_V(g) u (P) G_i^{U^\dagger(g) E U(g)} [V^{\dagger} (g) P V(g)]. 
\end{equation}
According to the mean value theorem~\cite{Bogachev2007a}, there exists a function $\bar{G}_i(P)$ with $0 \le \bar{G}_i(P) \le 1$ such that
\begin{align}
&  \int \d \mu (g)  R_V(g) u (P) G_i^{U^\dagger(g) E U(g)} [V^{\dagger} (g) P V(g)] = \nonumber \\ & \qquad \qquad \bar{G}_i (P) \int \d\mu (g)  u [V^{\dagger}(g)P V(g)].
\label{eq:mean_theorem} 
\end{align}
Let $u^{\ast}(P)= \int \d\mu (g)  u [V^{\dagger}(g)P V(g)]$, which is obviously $R_V-$covariant. Then
\begin{equation}
E_i'= \int \d \omega (P) u^{\ast} (P) \bar{G}_i(P).
\end{equation}
To see that $\bar{G}$ satisfies the normalisation, we sum~\eqref{eq:mean_theorem} over $i$:
\begin{equation}
\left[\sum_{i=1}^{n} \bar{G}_i (P) \right] u^{\ast} (P) = u^{\ast} (P), 
\end{equation}
which means $\sum_{i=1}^{n} \bar{G}_i (P) =1$ almost everywhere with measure generated by $u^{\ast}(P)$. Therefore $\bar{G}$ is a proper response function for measurement $E$ with LHS ensemble $u^{\ast}$. Thus $\rho$ is also $u^{\ast}$-unsteerable.
\end{proof}

One can easily check that both proofs work equally well when the measurements are restricted to PVMs. To better understand the relation of Theorem~\ref{th:second} with Lemma 1 of Ref.~\cite{wiseman2007steering}, we come back to the example of the two-qubit Werner state $W_{p}$ in the main text. According to Lemma 1 of Ref.~\cite{wiseman2007steering}, if $W_p$ is unsteerable, there exists a covariant indexed LHS ensemble for which $W_{p}$ is unsteerable, and then it is deduced that this singles out the uniform distribution over the hidden variables as the ``optimal'' LHS ensemble $u^{\ast}$. In fact, there exist infinitely many indexed ensembles that are covariant under the $\mathrm{U}(2)$ action. For example, consider and index space $\Lambda= Z_K \times \S_B$ with $Z_K=\{0,1,2,...,K-1\}$, i.e., we have a composite hidden variable $\lambda=(\alpha,P)$ with $\alpha=0,1,2,...,K-1$ and $P \in \S_B$. The measure $\nu$ on $\Lambda$ is generated by the distribution $u$ on $\Lambda$, defined by $u(\alpha,P)= c_\alpha$ independent of $P$ with $c_\alpha\ge 0$, $\sum_{\alpha=0}^{K-1} c_\alpha=1$.

The action of $g \in \mathrm{U}(2)$ on $\Lambda$ can be defined as $g\lambda= g(\alpha,P)= (\alpha,g P g^{\dagger})$. This action is not transitive on $\Lambda$; it has $K$ orbits indexed by $\alpha$. As a result, despite the fact that $u$ is $\G$-invariant, $u(\lambda)=u (g\lambda)$, it is not uniform over $\Lambda$ if $c_\alpha$ are distinct numbers. 

Now consider the indexed LHS ensemble given by the indexing function $F:\Lambda \to \S_B$, $(\alpha,P) \mapsto F(\alpha,P)=P$ (which is many-to-one). The LHS ensemble is apparently covariant, that is, $F(\lambda) = g^{\dagger} F(g \lambda) g$ or even $u({\lambda}) F(\lambda) = u(g\lambda) g^{\dagger} F(g \lambda) g$.

Being unsteerable with respect to this indexed LHS ensemble implies that for any $n$-POVM $E$, there exist response functions $G_i(\alpha,P)$ such that
\begin{align}
E_i' &= \sum_{\alpha=0}^{K-1} \frac{1}{4 \pi}\int \d S (P) G_i (\alpha,P) u (\alpha,P) P \nonumber \\
&= \sum_{\alpha=0}^{K-1}  \frac{1}{4 \pi} \int \d S(P) G_i (\alpha,P) c_{\alpha} P.
\end{align}
where $S$ is surface measure on Bob's Bloch sphere. On the face of it, this does not imply that one can choose the uniform distribution on Bob's Bloch sphere to be the LHS ensemble. The latter requires that there exists response function $\bar{G}_i(P)$ such that 
\begin{align}
E_i' = \frac{1}{4 \pi} \int \d S(P) \bar{G}_i (P) P.
\end{align}
The existence of $\bar{G}_i (P)$ only follows upon applying the mean value theorem as in Proof 2, which states that there exist $\bar{G}(P)$ such that 
\begin{equation}
\bar{G}(P)= \sum_{\alpha=0}^{K-1} c_\alpha G(\alpha,P).
\end{equation}
More complicated examples can be easily constructed by replacing $Z_K$ with a more complicated measurable space. Corollary~\ref{cor:1} then implies that all these different constructions are actually equivalent when one concerns with simulating steering assemblages, since they generate the same uniform distribution on the Bloch sphere. Thus one sees that Lemma 1 of Ref.~\cite{wiseman2007steering}, when augmented with Corollary~\ref{cor:1}, can also identify the uniform distribution as the optimal choice for LHS ensemble as stated directly in Theorem 2. 

\section{Simulated annealing and computation of the gap function}
\label{sec:simulated_annealing}
Simulated annealing is a standard heuristic algorithm to solve a generic global optimisation problem~\cite{press1989numerical}. In our case, we wish to compute
\begin{equation}
\Delta = \min_{Z \in \C^4,E \in \N^4} F(Z,E)
\end{equation}
with 
\begin{equation}
F(Z,E)= \frac{1}{4 \pi}\int \d S (P) \max_{i} \dprod{Z_i}{P} - \sum_{i=1}^{4} \Tr [\rho (Z_i \otimes E_i)],
\label{eq:objective_func}
\end{equation}
where $S$ is the sureface measure of the Bloch sphere (which is different from the Haar measure by a factor $\frac{1}{4 \pi}$).
The simulated annealing algorithm goes as follows. One first regards $F(Z,E)$ as an energy function of a system in the state space $(Z,E)$. 
Simulated annealing couples this system to an effective heat bath, whose temperature is then lowered slowly, so that configurations with decreasing energy are explored. 
At each temperature the system follows stochastic dynamics leading to equilibrium with the heat bath. 
The system is cooled down slowly to sufficiently small temperature $T_f$. 
It is known that if the temperature schedule is sufficiently slow then the system converges to a global minimum of $F(Z,E)$~\cite{Granville1994a}.
However, the required cooling schedule is too slow that it is not useful in practice and an alternative cooling schedule is used. 
Here we use an exponential cooling scheme, i.e., in each step the temperature is cooled down by a factor $f$.
The system can in principle become stuck in a local minimum at $T_f$. It is then necessary to repeat the cooling procedure multiple times. 

\begin{figure}[t]
	\includegraphics[width=0.45\textwidth]{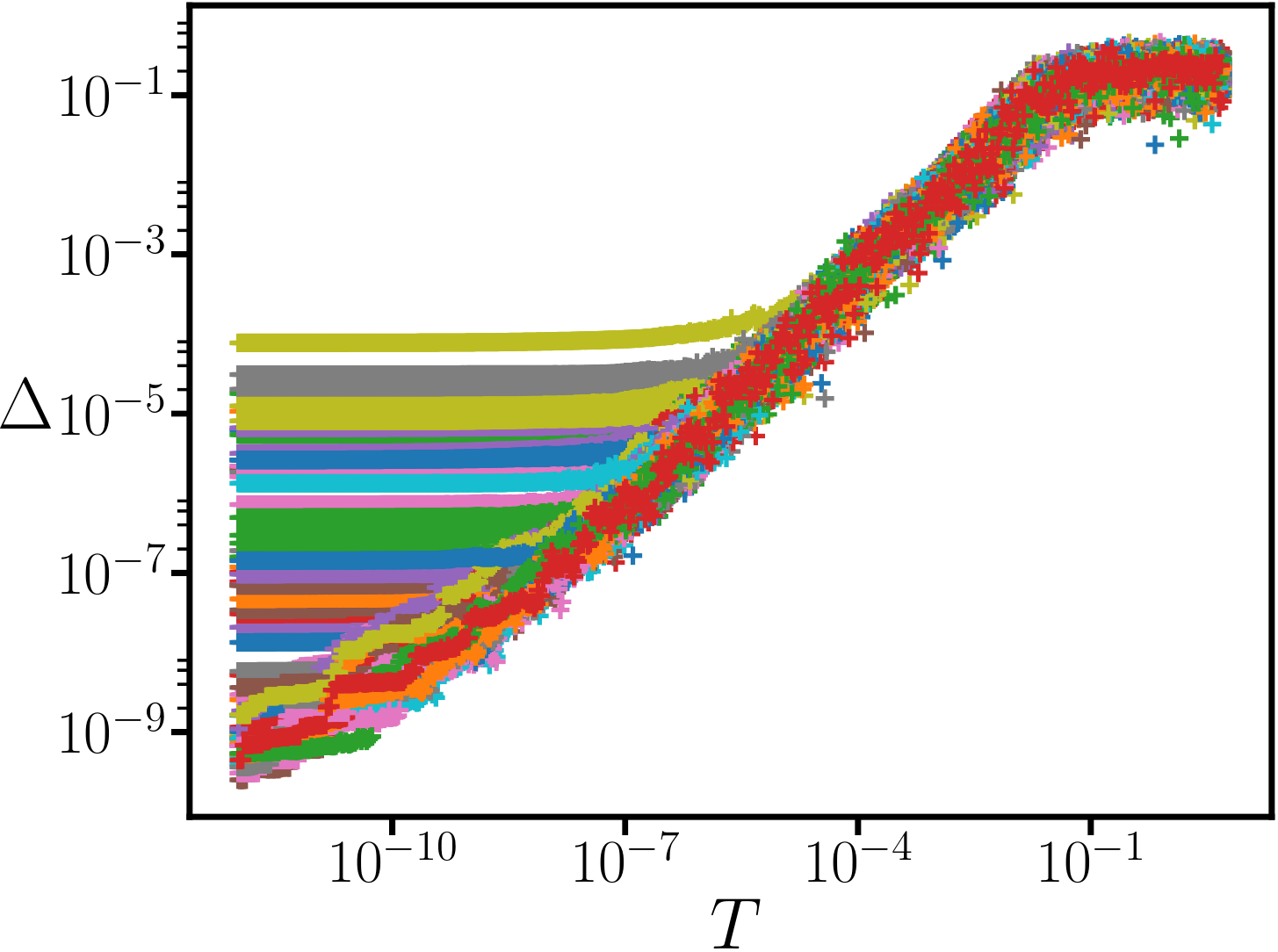}
	\caption{Typical trajectories of the energy of the system during the cooling procedure. The trajectories of $64$ different replicas are plotted in different colours. The data is for a Werner state with $p=0.49$.}
	\label{fig:werner_series}
\end{figure}

\textit{Coordinisation of variables.} Note that for two qubit systems, $\A=\B=M(\CC,2)$, where  $M(\CC,2)$ is the algebra of $2 \times 2$ complex matrices. We use the Pauli basis $\{\sigma_i\}_{i=0}^{3}=\{\II,\sigma_x,\sigma_y,\sigma_z\}$ to coordinate the real subspace $M^{H}(\CC,2)$. Each operator $X \in M^H(\CC,2)$ is therefore characterised by $4$ (real) coordinates $x_i$,
\begin{equation}
X= \frac{1}{2} \sum_{i=0}^{3} x_i \sigma_i.
\end{equation}  
The boundary of the positive cone of $M^{H}(\CC,2)$ is given by $x_0^2- x_1^2-x_2^2-x_3^2=0$ with $x_0 \ge 0$, consisting of vectors of the form $\alpha \begin{pmatrix} 1 \\ \n \end{pmatrix}$ with $\alpha \ge 0$.

The composite operator $Z=\oplus_{i=1}^{4} Z_i$ and $E=\oplus_{i=1}^{4} E_i$ are thought of as $4 \times 4$ matrices, in which each column is the coordination of $Z_i$ and $E_i$ respectively. From now on, we will use $Z$ and $E$ to denote these matrices, and $Z_i$, $E_i$ to denote the $i^\text{th}$ column. 

To implement the constraint $\C^4$ to $Z$, we write \mbox{$Z=XR$}, where
\begin{equation}
R=\frac12\begin{pmatrix}
1 & -1 & -1 & 1 \\
-1 & -1 &  1 & 1 \\
-1 & 1 & -1 & 1 \\
1 & 1 & 1 &  1
\end{pmatrix},
\end{equation}  
and $X$ satisfies $\Tr (X^{T} X)=2$, $X_4=(0,0,0,0)^T$.

\begin{figure}[t!]
	\begin{center}
		\includegraphics[width=0.48\textwidth]{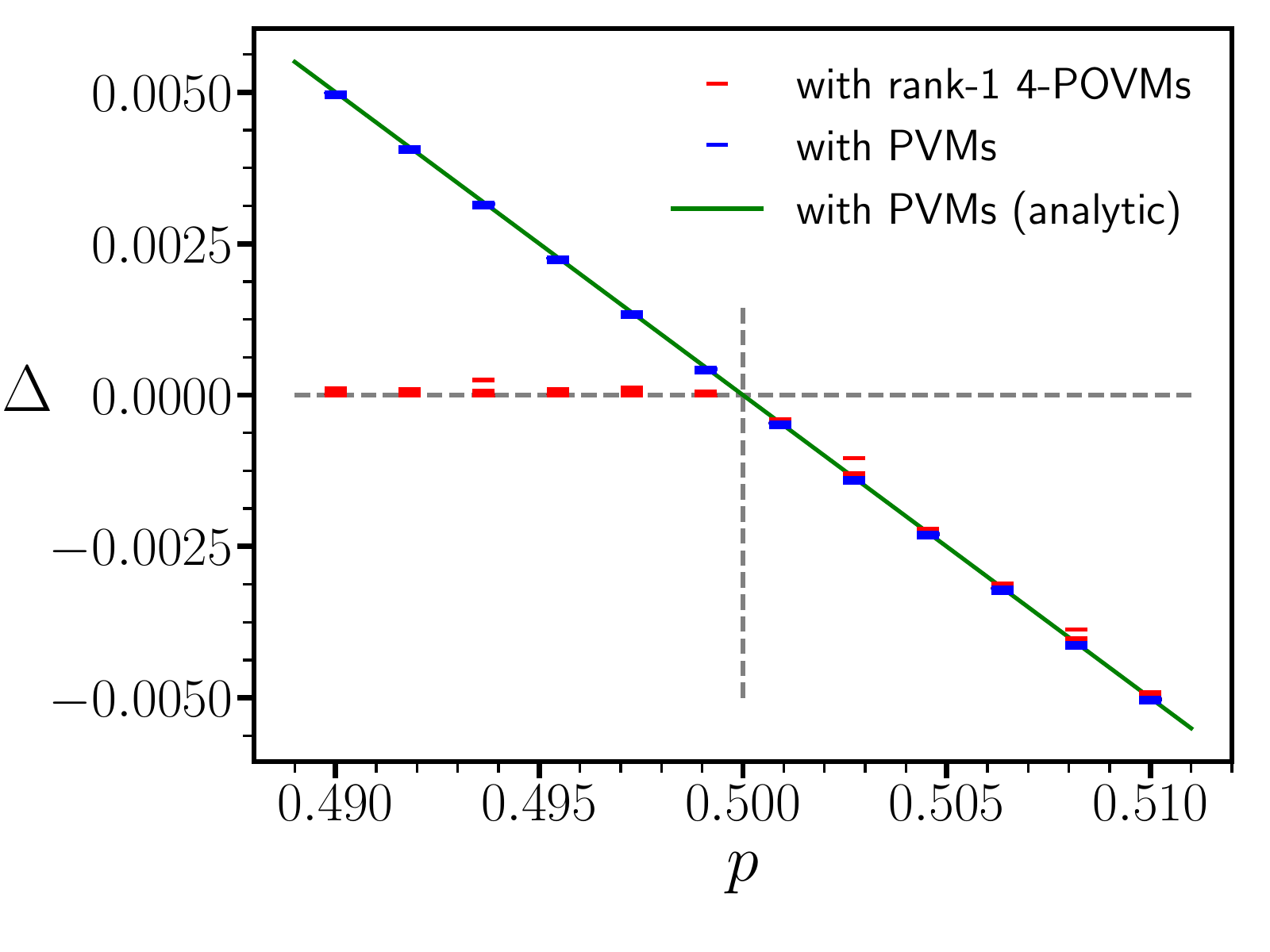}
	\end{center}
	\caption{The gap function for the Werner state with PVMs and rank-1 4-POVMs from $512$ replicas. Note that $512$ trials strongly concentrate at the minimum values, which gives confidence to the algorithm. 
		}
	\label{fig:werner_replicas}
\end{figure}

To implement the constraint $\N^4$ on $E$, we note that every component $E_i$ is on the boundary of the positive cone, 
$E_i = \alpha_i \begin{pmatrix} 1 \\ \n_i \end{pmatrix}$, where $\alpha_i \ge 0$ and $\pmb{n}_i$ is a unit vector.
The constraint $\sum_{i=1}^{4} E_i =\II$ can be considered as a constraint for $\{\alpha_i\}_{i=1}^{4}$. In fact, if $\begin{pmatrix} 1 \\ \n_i \end{pmatrix}$ are independent, $\{\alpha_i\}_{i=1}^{4}$ are uniquely determined. Note that the set of $E$ where $\begin{pmatrix} 1 \\ \n_i \end{pmatrix}$ are dependent are zero-measured in $\N^4$. In practice, we therefore do not need to worry about the case that $\begin{pmatrix} 1 \\ \n_i \end{pmatrix}$ are dependent if the linear solver is relatively stable. Here we use the Householder linear solver, provided by \texttt{Eigen 3}~\cite{eigenweb}. Further, we use a common technique~\cite{press1989numerical} to take care of the constraint $0 \le \alpha_i \le 1$ by assigning infinite values to the energy if the solution $\alpha_i$ are outside $[0,1]$. This allows us to describe the (extended) set $\N^4$ by $4$ vectors $\{\n_i\}_{i=1}^{4}$.

\textit{Annealing.} At temperature $T$, the stochastic dynamics of the system is simulated by the Metropolis algorithm~\cite{press1989numerical}: at every time step, the system tries an elementary step, which will be accepted with probability $\min\{1, e^{-\Delta F/T}\}$, where $\Delta F$ is the change in energy due to the trial step. In each elementary trial step, either $Z$ or $E$ is updated with equal probability. If $Z$ is updated, we choose randomly two elements of $X$, say $X_{ij}, X_{kl}$, where $j,l <4$, and perform a rotation $Q(\theta) \in SO$(2) on the vector $(X_{ij}, X_{kl})^T$ by a random angle $\theta$ normally distributed with mean $0$ and standard deviation $2\pi\sqrt{T}$. The components of the vector $Q(\theta)(X_{ij}, X_{kl})^T$ replace the $ij$ and $kl$ elements of $X$. Note that the constraints $\Tr (X^{T} X)=2$ and $X_4=(0,0,0,0)^T$ are respected in the new $X$. Then $Z$ is updated as $Z=XR$. If $E$ is updated, we choose one of the vectors $\n_i$ randomly, and rotate it around one of the $3$ axes $x$, $y$, $z$ by a random angle normally distributed with mean $0$ and standard deviation $2\pi\sqrt{T}$.  At each temperature, the number of the simulated steps are at least $100$ times the degree of freedom. 

\begin{figure}[t!]
	\begin{center}
		\includegraphics[width=0.45\textwidth]{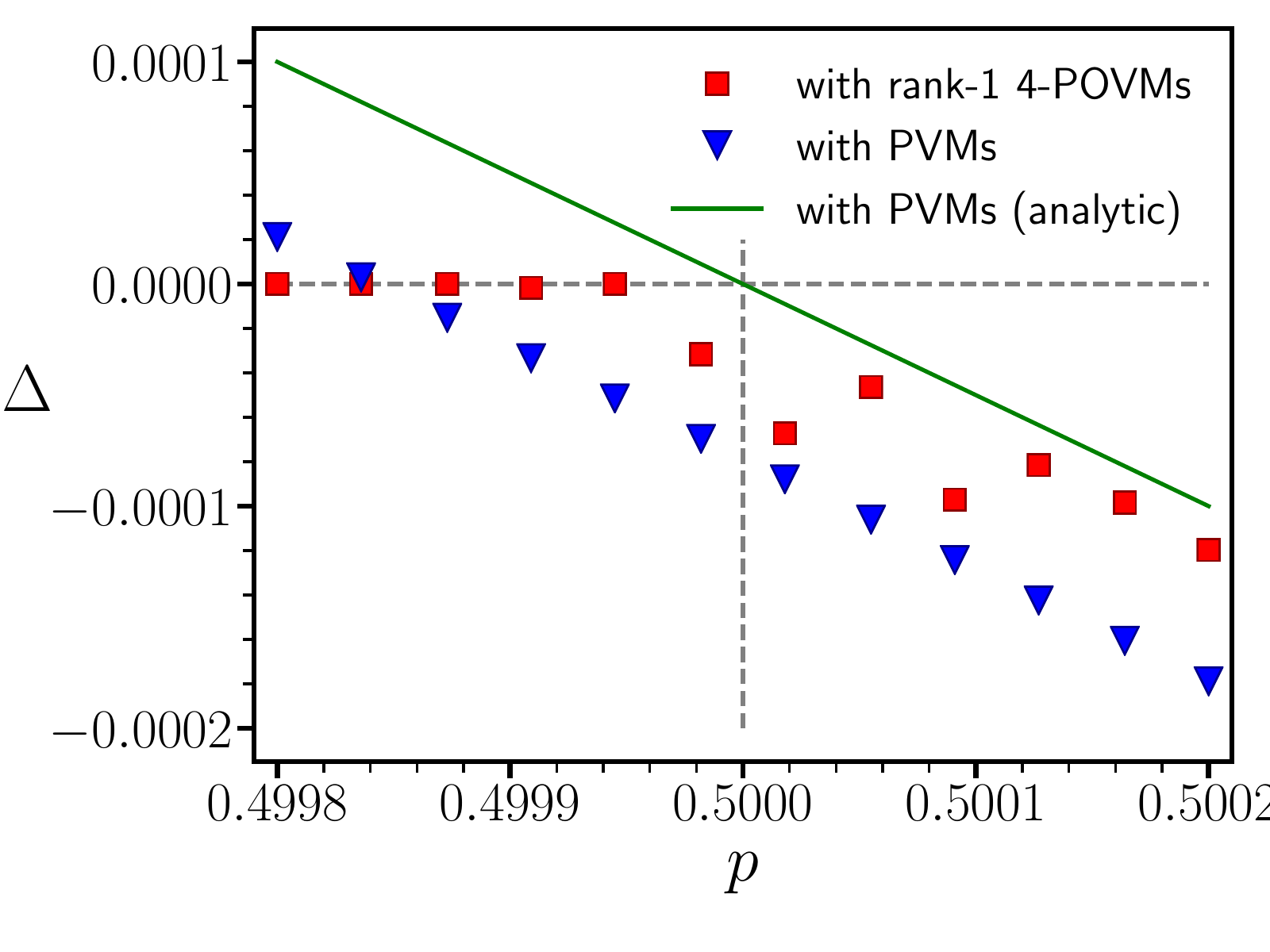}
	\end{center}
	\caption{The gap function for the Werner state with PVMs and ePOVMs at high resolution around the transition point $p=\frac{1}{2}$.}
	\label{fig:werner_fine}
\end{figure}

\textit{Cooling schedule.} After annealing the system at temperature $T$, the temperature is decreased by a factor $f$; here $f=0.95$. This is known as exponential temperature scheduling~\cite{press1989numerical}. The initial temperature $T_i$ is chosen as the maximal value minus the minimal value of the energy function sampled at $1000$ times the degree of freedom points. The algorithm is stopped at temperature $T_f=10^{-9}$. Lowering the final temperature does not significantly improve the results.

\textit{Replicas.} We repeat the cooling procedure $M=512$ times. 64 such typical cooling trajectories are presented in Figure~\ref{fig:werner_series}. As shown in Figure~\ref{fig:werner_replicas}, all $512$ replicas produce a very similar minimum energy, suggesting that there is no major local minimum in the energy landscape. This provides confidence that the system indeed converges close to a global minimum.

\textit{Numerical accuracy.} As seen in Figure~\ref{fig:werner_series}, the gap can be \emph{overestimated} by some order of $10^{-4}$. This makes it difficult to study very small gap values when $p$ is around $\frac{1}{2} \pm 10^{-4}$. Other cooling schedules~\cite{press1989numerical}, parallel tempering~\cite{swendsen1986replica} or more subtle global optimisation techniques~\cite{Arora1995a} can be considered to increase the accuracy. However, the numerical accuracy is limited by another critical factor: the accuracy of the spherical integral in~\eqref{eq:objective_func}. Here, we used Lebedev's quadrature with $5810$ points to compute spherical integrals. This effectively replaces the optimal uniform LHS ensemble by a suboptimal discrete distribution at $5810$ quadrature points. Accordingly, the expected transition probability $p$ is shifted by some value of $2 \times 10^{-4}$ to the left of $\frac{1}{2}$, smearing out the accuracy of the simulated annealing optimisation as seen in Figure~\ref{fig:werner_fine}. Similar problems occur for $T$-states at a resolution of $\abs{\epsilon} \approx 10^{-3}$ around the surface of unsteerable states.

\section{The gap function for steering the Werner states with PVMs}
\label{sec:PVM_gap}

The numerical calculation of the gap function for the Werner states with PVMs is carried out similarly to Appendix~\ref{sec:simulated_annealing}. The analytical calculation of the gap function for steering a Werner state with PVMs is rather straightforward. We are to calculate
\begin{multline}
\Delta = \min_{(Z_1,Z_2) \in \C^2} \left\{ \frac{1}{4 \pi} \int \d S (P) \max \{ \dprod{Z_1}{P}, \dprod{Z_2}{P}\} - \right. \\ \left. \max_{(P_1,P_2)} \Tr [W_p (P_1 \otimes Z_1 + P_2 \otimes Z_2)] \right\},
\end{multline}
where $(P_1,P_2)$ forms a projective measurement, i.e., $P_1$, $P_2$ are orthogonal projections such that $P_1+P_2=\II$.

Since $(Z_1,Z_2) \in \C^2$ implies that $Z_1+Z_2=0$, we can set $Z_1=X$ and $Z_2=-X$. Moreover, because of the $\operatorname{U}(2)$ symmetry of the problem, we can suppose $X= \lambda_0 \ketbra{0}{0} + \lambda_1 \ketbra{1}{1}$ with $\lambda_0 \ge \lambda_1$. Because $\dprod{Z_1}{Z_1} + \dprod{Z_2}{Z_2}= 1$, we have $\dprod{X}{X}= \frac{1}{2}$, or $\lambda_0^2+\lambda_1^2=\frac{1}{2}$.

If we write the projections in Pauli coordinates as $P=\begin{pmatrix} 1 \\ \n \end{pmatrix}$, then $\dprod{Z_1}{P} \ge \dprod{Z_2}{P}$ is equivalent $n_z \ge -\frac{a}{b}$, with $a=\lambda_0+\lambda_1$ and $b=\lambda_0-\lambda_1$. Therefore
\begin{multline}
\frac{1}{4 \pi} \int \d S (P) \max \{ \dprod{Z_1}{P}, \dprod{Z_2}{P}\} = \\  \frac{1}{4 \pi} \int_{n_z \ge -\frac{a}{b}} \d S (P)  \dprod{X}{P} - \int_{n_z \le -\frac{a}{b}} \d S (P)  \dprod{X}{P},\\
\end{multline}
which evaluates to $\frac{1}{4b}$. 

On the other hand 
\begin{multline}
\max_{(P_1,P_2)} \Tr [W_p (P_1 \otimes Z_1 + P_2 \otimes Z_2)] = \\
\max_{P_1} \Tr [W_p (P_1 \otimes X)] - \Tr[W_p (\II \otimes X)],
\end{multline}
which evaluates to $\frac{pb}{2}$.

Thus we have $\Delta = \min_{b} \left\{ \frac{1}{4b} - \frac{pb}{2}\right\}$. Note that $\lambda_0^2+\lambda_1^2=\frac{1}{2}$ implies that $a^2+b^2=1$, and so $b\le 1$. Hence we obtain the gap function for PVMs as $\Delta = \frac{1}{4}-\frac{p}{2}$.
\bibliography{steering-geometry}

\end{document}